\newcommand{\eps}{\varepsilon}
\newcommand{\ecc}{\mathsf{ECC}}
\newcommand{\ECC}{\ecc}
\newcommand{\model}{\mathscr{M}}
\newcommand{\silence}{\emptyset}    %
\newcommand{\erase}{\bot}
\newcommand{\T}{{\cal T}}
\newcommand{\N}{{\cal N}}
\newcommand{\hamming}{\Delta}   %
\newcommand{\dist}{\hamming}
\newcommand{\Tenc}{\mathsf{TCenc}} %
\newcommand{\Tdec}{\mathsf{TCdec}} %
\newcommand{\anc}{anc}	%
\DeclareMathOperator*{\argmin}{\arg\!\min}
\newcommand{\bc}{Blueberry code\xspace}
\newcommand{\X}{\mathcal{X}}
\newcommand{\Y}{\mathcal{Y}}
\newcommand{\Z}{\mathcal{Z}}
\newcommand{\commmain}{\comm^{\text{adp}}}
\newcommand{\maxround}{R_{\text{max}}}
\newcommand{\erate}{\mathsf{NR}}
\newcommand{\err}{\mathsf{NC}}
\newcommand{\comm}{\mathsf{CC}}
\newcommand{\RC}{{\mathsf{RC}}}
\newcommand{\ch}{\mathsf{Ch}}
\newcommand{\Mfree}{\model_\text{adp}}
\newcommand{\Mpaid}{\model_\text{term}}
\newcommand{\mmain}{\Mfree}
\newcommand{\madp}{\mmain}
\newcommand{\mterm}{\Mpaid}
\newcommand{\mna}{\text{non-adaptive}}
\newcommand{\mabort}{\mterm^{\dagger}}   %
\newcommand{\ter}{\mathsf{TER}}
\newcommand{\defn}{\overset{\Delta}{=}}
\begin{document}

\title{Adaptive Protocols for Interactive~Communication}

\author{Shweta Agrawal\inst{1} \and Ran Gelles\inst{2} \and Amit Sahai\inst{3}}
\institute{
I.I.T Delhi, \email{shweta@cse.iitd.ac.in} \and
Princeton University,  \email{rgelles@cs.princeton.edu} \and
University of California, Los Angeles, \email{sahai@cs.ucla.edu}
}

\maketitle

\thispagestyle{plain}
\pagestyle{plain}

\begin{abstract}
How much adversarial noise can protocols for interactive communication tolerate? This question was examined by Braverman and Rao (\textit{IEEE Trans.\@ Inf.\@ Theory}, 2014) for the case of ``robust'' protocols, where each party sends messages only in fixed and predetermined rounds. 
We consider a new class of \emph{non-robust} protocols for Interactive Communication, which we call {\it adaptive} protocols. Such protocols adapt \emph{structurally} to the noise induced by the channel in the sense that both the order of speaking, and the length of the protocol may vary depending on observed noise.\looseness=-1

We define models that capture adaptive protocols and study upper and lower bounds on the permissible noise rate in these models. When the length of the protocol may adaptively change according to the noise, we demonstrate a protocol that tolerates noise rates up to~$1/3$.  When the order of speaking may adaptively change as well, we demonstrate a protocol that tolerates noise rates up to~$2/3$. 
Hence, adaptivity circumvents an impossibility result of~$1/4$ on the fraction of 
tolerable noise (Braverman and Rao, 2014). 
\end{abstract}

\thispagestyle{empty}
\setcounter{page}{0}
\newpage

\section{Introduction}
\label{sec:intro}

One of the fundamental questions considered by Computer Science is ``What is the best way to encode information in order to recover from channel noise''? This question was studied most notably by Shannon, in a pioneering work \cite{shannon48} which laid the foundation of the rich area of information theory. Shannon considered this question in the context of one way communication, where one party wants to transmit a message ``once and for all'' to another. More recently, in a series of beautiful papers, Schulman \cite{schulman92,schulman93,schulman96} generalized this question to subsume \textit{interactive communication}, i.e.\@ the scenario where 
two remote parties perform some distributed computation by 
``conversing'' with each another in an interactive manner, so that each subsequent message depends on all messages exchanged thus far. Surprisingly, Schulman showed that, analogous to the case of one way communication, it is indeed possible to embed any interactive protocol~$\pi$ within a larger protocol~$\pi'$ so that~$\pi'$ computes the same function as~$\pi$ but additionally provides the requisite error correction to tolerate noise introduced by the channel.

The noise in the channel may be stochastic, in which error occurs with some probability, or adversarial, in which the channel may be viewed as a malicious party Eve who disrupts communication by injecting errors in the worst possible way. In this work we focus on adversarial noise.  
Schulman~\cite{schulman93,schulman96} provided a construction that turns a protocol~$\pi$ with communication complexity~$T$, to a noise-resilient~$\pi'$ which communicates at most~$O(T)$ symbols, and can recover from an adversarial (bit) noise rate of at most~$1/240$. 
This result was later improved by Braverman and Rao~\cite{BR11,BR14}, who provided a protocol that can recover from a (symbol) noise rate up to~$1/4 -\eps$ and also communicates at most $O(T)$~symbols.  

Both the above constructions assume  \emph{robust protocols}.
Intuitively speaking, robust protocols are synchronized protocols that have a fixed length and a predetermined ``order of communication''. In this class of protocols, each party knows at every time step whose turn it is to speak and whether the protocol has terminated, since these properties are fixed in advance and \emph{independent} of the noise introduced by the adversary.
However, one can imagine more powerful, general protocols where 
the end point of the protocol or the order of speaking are not predetermined but rather depend
on the observed transcript, that is, on the observed noise.
While Braverman and Rao show that for any robust protocol $1/4$~is an upper bound on the tolerable noise rate, they explicitly leave open the question of whether non robust protocols
admit a larger amount of noise.

We address this question by considering two types of non-robust protocols, that
allow for greater adaptivity in the behavior of the participants. 
First, we allow the length of the protocol to be adaptively specified during
the protocol by its participants.
Next, we consider even greater adaptivity 
and allow the party that speaks next in the protocol to be adaptively chosen by
the participants of the protocol. 
In both these situations, we show that increasing adaptivity allows for a
dramatic increase in the noise resilience of protocols.

We draw attention of the reader to the fact that while for robust protocols, Yao's~\cite{Yao79} model is almost universally accepted as natural and meaningful, it is far less obvious what is the right way to model non-robust protocols, or even if there is a unique choice. Defining models to capture adaptivity is subtle, and several choices must be made, for example in how adversarial noise is budgeted and in how to model rounds in which there is no consensus regarding who the speaker is. Different modeling choices lead to different protocol capabilities and we believe it is important to explore the domian of this very young area in order to find settings that are both natural and admit protocols with higher noise resilience.

In a recent work, Ghaffari, Haeupler, and Sudan~\cite{GHS14,GH14}, proposed one natural set of choices to model adaptivity, and provided efficient protocols in that model which resist noise rates of up to~$2/7$, surpassing the maximal resilience of the non-robust case. In this paper we make a different, but arguably just as natural, set of choices, which lead to adaptive protocols with even higher noise resilience. We proceed to summarize the most salient differences in our modeling choices and the ones of~\cite{GHS14}.

First, the model in~\cite{GHS14} does not permit adaptive modification of the length of the protocol, while our model does. 
To the best of our knowledge, our work is the first to consider 
varying length interactive protocols and their noise resilience.
The second main difference is that in~\cite{GHS14} the channel may be used to communicate only in one direction at each round.
Specifically in~\cite{GHS14}, at each round, each party decides either to only \emph{talk} or to only \emph{listen}: if both parties talk at the same round, a collision occurs and no symbol is transferred, and if both listen at the same round, they receive some adversarial symbol not counted towards the adversary's budget. 
In our model, on the other hand, both parties may talk at the same round without causing any collision (similar to the case of robust protocols~\cite{schulman96,BR14}). The adaptivity stems from the parties' ability to individually choose at each round, whether they talk or not. 

The two different modeling choices taken by~\cite{GHS14} and by us lead to different bounds on the noise an adaptive protocol can handle. For instance, while the protocols of~\cite{GHS14} can handle up to a relative noise of~$2/7$, our protocols can resists a higher noise rate of~$1/3$ if the length of the protocol may adaptively change, or noise rate of up to~$2/3$ when both the length and the order of speaking adaptively vary. 
We now give more details about our adaptive model and the noise rate our protocols can resist.

\subsection{Our Results: Adaptive Length}
We begin by considering adaptive protocols 
in which the \emph{length} of the protocol may vary as a function of the noise,
however the order of speaking is still predetermined. 
Specifically, each party individually decides whether to continue participating in the protocol, or terminate and give an output. We denote the class of such protocols as~$\mterm$ (see formal definition in Section~\ref{sec:mterm-model}).

Intuitively, changing the length of the protocol is useful for two reasons. 
First, the parties may realize that they still did not complete the computation, and communicate more information in order to complete the task. 
On the other hand, the parties may see that the noise level is so high that there is no hope to correctly complete the protocol. 
In this case the parties should abort the computation, since for such a high noise level, 
the protocol is not required to be correct anyway. 
The difficult part for the parties is, however, to be able to distinguish between the first case and the second one in a coordinated way and despite the adversarial noise.

If the length of the protocol is not fixed (and subsequently, its communication complexity), the noise rate must be defined with care. Generalizing the case of fixed-length protocols, we consider the ratio of corrupted symbols out of all the symbols that were communicated \emph{in that instance}, and call this quantity the  \emph{relative} noise rate. We emphasize that both the numerator and the denominator of this ratio vary in adaptive protocols.

Our main result for this type of adaptivity  is
a protocol that resists relative noise rates of up to~$1/3$ (Theorem~\ref{thm:protocol-third}). 
The protocol works in two steps: in the first step Alice communicates her input to Bob using some standard error correction code; 
in the second step Bob estimates the noise that occurred during the first step, and then  he communicates his input to Alice using an error correction code with parameters that depend on his noise estimation. In general, the more noise Bob sees during the first step, the less redundant his reply to Alice would be---if there was a lot of noise during the first part, the adversary has less budget for the second part, and the code Bob uses can be weaker.

The communication complexity of our protocol above is a constant factor (where the constant 
depends on the channel quality) times the
input lengths of Alice and Bob.
However, our protocol requires the parties to communicate their inputs, even in cases
where the lengths of the inputs may be very long with respect to the communication complexity of the best noiseless protocol; thus, the \emph{rate} of this coding strategy (the length of the noiseless protocol divided by the length of the resilient protocol) can be vanishing when the length of the noiseless protocol tends to infinity.  
Nevertheless, our coding protocol serves as an important proof of concept for the strength of this model:
Indeed, in the non-adaptive setting, the rate of the coding scheme has no effect
on the noise resilience. E.g., an upper bound of~$1/4$ holds for coding schemes even when their rate is vanishing~\cite{BR14}.  

In addition to our schemes, we show an upper (impossibility) bound of $1/2$ on the tolerable noise in that model (Theorem~\ref{thm:mterm-half}). 
We emphasize that previous impossibility proofs (i.e., \cite{BR14}) crucially use the property of robustness: 
in robust protocols there always exists a party that speaks at most half of the symbols, whose identity is known in advance, making it a convenient target for adversarial attack. 
Contrarily, in adaptive protocols the party that speaks less may depend on the noise and vary throughout the protocol. 
We provide a new impossibility bound 
by devising an attack that corrupts \emph{both} parties with rate~$1/2$, and carefully arguing that at least one of the parties must terminate before it learns the correct output.

\begin{table}[htb]
\vspace{-1em}
\begin{center}
\small
\begin{tabular}{llll@{}}
\toprule
\textbf{Model}		 & \textbf{Lower Bound $\alpha$}$\quad$ 	&	\textbf{Upper Bound $\beta$}$\quad$	& \textbf{Ref.} \\
\midrule
$(\mna)$ 	&	$1/4$	&	$1/4$		& \cite{BR14}			\\
$\mterm$  \phantom{xxxxxxxx}		&	$1/3$	&	$1/2$		& \S\ref{sec:mterm}	\\
\bottomrule
\end{tabular}
\end{center}
\caption{Summary of our bounds for the $\mterm$ model, compared to the non-adaptive model.
$\alpha$ and $\beta$ are the lower (existence) and upper (impossibility) bounds on the allowed noise rate: for any function there exists a protocol that withstands noise rate $c$ if $c<\alpha$. Yet, there exists a function for which  no protocol withstands noise rate~$\beta$. 
}
\vspace{-3.5em}
\label{tab:res-mterm}
\end{table}

\vspace{-1.5ex}
\subsection{Our Results: Adaptive Order of Speaking}
Next, we define the $\mmain$~model in which we allow the order of speaking to depend on the noise (see formal definition in Section~\ref{sec:mmain-model}). Specifically, at each round each party decides whether it sends the next symbol or it keeps silent; the other side, respectively, either learns the symbol that was sent, or receives ``silence''.\footnote{A similar notion of party keeping silent was used in interactive protocols over \emph{noiseless} channels, by~\cite{DFO10,IW10}.}  
We stress that silent rounds, i.e.~when no message is delivered, are not counted towards the communication, or otherwise the model becomes equivalent to the~$\mterm$ model.
We note that this type of adaptivity also implies a varying length of the protocol, e.g., in order to terminate, a party simply keeps silent and disregards any incoming communication.

Similar to the $\mterm$ model, 
the adversary is allowed to corrupt any transmission, and
we measure the noise rate as the ratio of corrupted transmissions to the communicated (non-silent) symbols.
It is important to emphasize that the adversary is not limited to only corrupting symbols, but it can also \emph{create} a symbol when a party decides to keep silent, or \emph{remove} a transmitted symbol leading the receiving side to believe the other side is silent. This makes a much stronger adversary\footnote{It is easy to show that unless we give the adversary the power to insert and delete symbols, the model is too strong and the question of resisting noise becomes trivial: in that case the protocol can encode a `0' as a silent transmission, and a `1' as a non-silent transmission, thus perfectly resisting any possible noise.} that may induce relative noise rates that exceed~1.\looseness=-1

Here 
we construct an adaptive protocol which crucially uses both the ability to remain silent as well as the ability to vary the length of the protocol, to withstand noise rates~$<2/3$ (Theorem~\ref{thm:ProtocolTwoThirds}). 
The protocol behaves quite similar to the $\mterm$ protocol that achieves noise rates up to~$1/3$ with an additional layer of encoding  that takes advantage of being able to remain silent, and provides another factor of~$2$ in resisting noise. We name this new layer of code \emph{silence encoding}. The idea behind this layer is that using $k$ non-silent transmissions, one can obtain a code with distance~$2k$.  Then, in order to cause a decoding error, the adversary must invest $\tfrac12 2k +1$ corruptions, or otherwise either the correct symbol can be decoded, or the symbol becomes an erasures (which is easier to handle than an error). Note that for the special case of~$k=1$, \emph{two} corruptions are required to cause decoding of an incorrect symbol (or otherwise, the adversary only causes an erasure).

The main drawback of the above protocol is that its length (i.e., its round complexity) may be very large with respect to the length of the optimal noiseless protocol; thus it has a vanishing rate. 
Next, we restrict the discussion only to adaptive protocols 
whose length is linear in the length of the optimal noiseless protocol (thus their rate is a positive constant and not vanishing), 
and show a protocol with non-vanishing rate that tolerates noise rates of up to~$1/2$ (Theorem~\ref{thm:ProtocolHalf}). The protocol is based on the optimal (non-robust) protocol of~\cite{BR14} with an additional layer of of silence encoding which effectively forces the adversary to ``pay twice'' for each error it wishes to make. This way the protocol can withstand twice the number of errors than~\cite{BR14}.

\begin{table}[htb]
\vspace{-1em}
\begin{center}
\small
\begin{tabular}{lccl@{}}
\toprule
\textbf{Model}		 & \textbf{Noise Resilience} $\quad$	& \textbf{Non-Vanishing Rate} $\quad$	& \textbf{Ref.} \\
\midrule
$\mna$ 						&	$1/4$	& $\surd$	&	\cite{BR14}					\\
\midrule
$\mmain$						&	$2/3$	& 		&	\S\ref{sec:dsa}, \S\ref{app:freepos}				\\
$\mmain$ 				 		&	$1/2$	& $\surd$ 	&	\S\ref{app:mainCRprotocol} \\ %
$\mmain$ (shared randomness)	&	$1$		& $\surd$	&	\S\ref{app:protocolSharedRand}	\\
\midrule
$\mmain$ over erasure channels	&	$1$		& $\surd$ 	&	 \S\ref{sec:dsa} (\S\ref{app:protocolSharedRand}) \\

\bottomrule
\end{tabular}
\end{center}
\caption{Summary of the noise resilience of our protocols in the $\Mfree$ model. 
For any function~$f$, and for any constant $c$ less than the resilience, 
there exists a protocol that correctly computes~$f$ over any channel with relative noise rate~$c$.
Note that $1$ is a trivial impossibility bound for the $\mmain$~model, as the adversary can delete the entire communication.
}
\vspace{-2.5em}
\label{tab:res-madp}
\end{table}

If we relax the model to permit the parties to share some randomness unknown to the adversary, then we can construct a protocol that withstands an optimal $1 - \eps$ fraction of errors (Theorem~\ref{thm:main-shared}) and also achieves non-vanishing rate.
The key technique here is to adaptively repeat transmissions that were corrupted by the adversary: each symbol is sent multiple times until the other side indicates that the symbol was received correctly. However, now the adversary can corrupt this ``feedback'' and falsely indicate that a symbol was received correctly by the other side. To prevent such an attack we use the shared randomness to add a layer of error-detection (via the so called Blueberry code~\cite{FGOS15}). The adversary, without knowing the randomness, has a small probability to corrupt a symbol so it passes the error-detection layer, and  corrupts the sensitive ``feedback'' symbols with only a negligible probability. 

An interesting observation is that we can apply our methods to the setting of \emph{erasure channels} and obtain a protocol with linear round complexity (i.e., with a non-vanishing rate)
and erasure resilience of $1-\eps$ without the need for a shared randomness (Corollary~\ref{cor:erasure}).
We note that for non-adaptive protocols over erasure channels, $1/2$ is a tight bound on the noise: a noise of $1/2-\eps$ is achievable via the Braverman-Rao protocol (see~\cite{FGOS15}) or via the simple protocol of Efremenko, Gelles and Haeupler~\cite{EGH15}; on the other hand, a noise rate of $1/2$ is enough to erase the entire communication of a single party, thus disallowing any interaction~\cite{FGOS15}. Our protocol for adaptive settings hints that adaptivity can double the resilience to noise (similar to the effect of possessing preshared private randomness~\cite{FGOS15}, etc.).
Our bounds for the $\madp$ model are summarized in Table~\ref{tab:res-madp}.\looseness=-1

\subsection{Related Work.} 
As mentioned in the introduction, the study of coding for interactive communication was initiated by Schulman~\cite{schulman92,schulman93,schulman96} who provided protocols for interactive communication using {\it tree codes} (see Appendix~\ref{app:mainCRprotocol} for a definition and related works). 
In his work, Schulman considered both the stochastic as well as adversarial noise model, and for the latter provided a protocol that resists (bit) noise rate~$1/240$. Braverman and Rao~\cite{BR11,BR14} improved this bound to~$1/4$ by constructing a different tree-code based protocol (which is efficient except for the generation of tree codes). 
Braverman and Efremenko~\cite{BE14} considered the case where $\alpha$ fraction of the symbols from Alice to Bob are corrupted and $\beta$ fraction of the symbols in the other direction are corrupted. For any point $(\alpha,\beta)\in [0,1]$ they determine whether or not interactive communication (with non-vanishing rate) is possible. 
This gives a complete characterization of the noise bounds for the non-adaptive case.

Over the last years, there has been great interest in interactive protocols, considering various properties of such protocols
such as their efficiency~\cite{GMS11,GMS14} (stochastic noise),~\cite{BK12,BN13,GH14,BKN14} (adversarial noise), their noise resilience under different assumptions and models~\cite{FGOS13,BNTTU14,EGH15,FGOS15},
their information rate~\cite{KR13,Pankratov13,Haeupler14,GH15} and other properties, such as privacy~\cite{CPT13,GSW14} or list-decoding~\cite{GHS14,GH14,BE14}. 
We stress that all the works prior to this work (and to the independent work~\cite{GHS14,GH14}), assume the robust, non-adaptive setting.

The only other work that studies adaptive protocols is the abovementioned work of Ghaffari, Haeupler, and Sudan~\cite{GHS14}, which makes different modeling decision than our work.  
Ghaffari et al.\@ show that in their adaptive model, $2/7$~is a tight bound on fraction of permissible noise. 
The length of the protocol obtained in~\cite{GHS14} is quadratic in the length of the noiseless protocol, thus its rate is vanishing. However Ghaffari and Haeupler~\cite{GH14} later improve the length to be linear while still tolerating the optimal $2/7$~noise of that model. Allowing the parties to preshare randomness increases the admissible noise to~$2/3$.
We stress again that the setting of~\cite{GHS14} and ours are incomparable. Indeed, the tight $2/7$ bound of~\cite{GHS14} does not hold in our model and we can resist relative noise rates of up to $1/3$ or $2/3$ in the $\mterm$ and $\mmain$ models respectively. Similarly, while $2/3$ is the bound on noise when parties are allowed to share randomness in~\cite{GHS14}, in our model, the relative noise resilience for this setting is~$1$.

We note that interactive communication can also be extended to the multiparty case, following the more simple two party case, see e.g.~\cite{RS94,JKL15,HS14,ABEGH15}. The adaptive setting is particularly relevant
to asynchronous multiparty settings (as in~\cite{JKL15}) which is closely related to the  $\mmain$ model we present here.

Interactive (noiseless) communication in a model where parties are allowed to remain silent (similar to the case of the $\mmain$ model), was introduced by Dhulipala, Fragouli, and Orlitsky~\cite{DFO10}, who consider the communication complexity of computing symmetric functions in the multiparty setting.  In their general setting, each symbol $\sigma$ in the channel's alphabet has some weight $w_\sigma \in [0,1]$ and the weighted communication complexity, both in the average and worst case, is analyzed for a specific class of functions. Remaining silent can be thought of sending a special ``silence'' symbol, whose weight is usually~0.
Impagliazzo and Williams~\cite{IW10} also consider communication complexity given a special silence symbol for the two-party case. They establish a tradeoff between the communication complexity and the round complexity. Additionally, they relate these two measures to the ``standard'' communication complexity, i.e., without using a silence symbol.

\section{Protocols with an Adaptive Length}
\label{sec:mterm}

In this section, we study the $\mterm$ model in which parties adaptively determine the
length of the protocol by (locally) terminating at will. First, let us formally define the model.

\subsection{The $\mterm$ model}
\label{sec:mterm-model}
We assume Alice and Bob wish to compute some function $f: \X\times\Y \to \Z$ where Alice holds some input $x\in \X$ and Bob holds $y\in \Y$. The sets $\X,\Y$ and $\Z$ are assumed to be of finite size. 
We assume Alice and Bob run a protocol $\pi=(\pi_A,\pi_B)$, over a channel controlled by a malicious Eve.
At every step of the protocol, $\pi$ defines a message over some alphabet~$\Sigma$ of finite size (which may depend only on the targeted noise resilience) to be transmitted by each party as a function of the party's input, and the received messages so far.

In this model, each party sends symbols according to a predetermined order.
Let $I_A,I_B \subseteq \mathbb{N}$ be the round indices in which Alice and Bob talk, respectively. Note that $I_A$ and $I_B$ may be overlapping but we may assume without loss of generality that there are no ``gaps'' in the protocol, i.e. $I_A \cup I_B = \mathbb{N}$. The channel expects an input from party $P\in \{A,B\}$ only during rounds in~$I_P$. 
The behavior of Alice in the protocol is as follows (Bob's behavior is symmetric):

\begin{itemize}
\item In a given round $i \in I_A$, if Alice has not terminated, 
she transmits a symbol $a_i \in \Sigma \cup \{\silence\}$, where $\Sigma$ is the channel's alphabet and $\silence$ is a special symbol we call \emph{silence}.
\item
At the beginning of any round $i\in \mathbb{N}$, Alice may decide to terminate. In that case
she outputs some value, %
 sets $\ter_A = i$ and stops participating in the protocol. This is an irreversible decision. 
\item In every round $i \in I_A$ where $i \ge \ter_A$,  Alice's input to the channel is the special silence symbol~$a_i = \silence$. 
\item Eve may corrupt any symbol,  including silence, transmitted by either party. Thus, she acts upon transmitted symbols via the function $\ch: \Sigma \cup \{\silence\} \rightarrow \Sigma \cup \{\silence\}$, conditioned on the parties input, Eve's random coins and the transcript so far. Note that even after Alice has terminated, $a_i=\silence$ is sent over the channel and still might be corrupted by Eve.
\end{itemize}

Next we formally define some important measures of a protocol.
For a specific instance of the protocol we define the \emph{Noise Pattern} 
$E\in (\Sigma\cup \{\bot\})^*$ incurred in that instance in the following way.
Assume that Alice sends $(a_1,a_2,\ldots)$ 
and Bob sends $(b_1,b_2,\ldots)$, then 
$E=((e_{a_1},e_{a_2}, \ldots), (e_{b_1},e_{b_2}, \ldots))$  
so that $e_{a_i}=\bot$ if $\ch(a_i)=a_i$ and otherwise, $e_{a_i}=\ch(a_i)$, and similarly for~$e_{b_i}$. 

\begin{definition} 
For any protocol $\pi$ in the $\mterm$ model, 
we define the following measures for any given instance of $\pi$ running on inputs~$(x,y)$ 
suffering the noise pattern~$E$:
\begin{enumerate}
\item Communication Complexity: 
\(
\comm^{\text{term}}_\pi(x,y,E) \defn \left\lvert[\ter_A-1] \cap I_A\right\rvert + \left\lvert[\ter_B-1] \cap I_B\right\rvert,
\)
where $[n]$ is defined as the set $\{1,2, \ldots, n\}$.

\item Round Complexity:
\(
\RC ^{\text{term}}_\pi(x,y,E) \defn \max(\ter_A, \ter_B).
\) 
\item Noise Complexity: 
\(
\err^{\text{term}}_\pi(x,y,E) \defn \\
 \big|\{ i\in I_A \mid i  < \RC^{\text{term}}_\pi \; ,\;  \ch(a_i)\neq a_i\} \big| + 
  \big|\{ i\in I_B \mid i < \RC^{\text{term}}_\pi \; ,\;  \ch(b_i)\neq b_i\} \big|.
\)
\item Relative Noise Rate: 
\( \erate^{\text{term}}_\pi(x,y,E)  \defn {\err^{\text{term}}_\pi (x,y,E)}/{\comm^{\text{term}}_\pi (x,y,E)}.
\)
\end{enumerate}
\end{definition}
In order to avoid protocols that never halt, we assume there exists a global constant $\maxround$ and that for any input and noise pattern  $\RC  \le \maxround$.
Finally, we say that a protocol is \emph{correct} if both parties output~$f(x,y)$. We say that a protocol \emph{resists $\eps$-fraction of noise} (or, resists noise rate~$\eps$), if the protocol is correct (on any input) whenever the relative noise rate induced by the adversary is at most~$\eps$. 
Note that if a protocol resists noise rate of~$\eps$, and the relative noise in a specific instance is higher than~$\eps$, there is no guarantee on the output of the parties.

\subsection{Tolerating noise rates up to~$1/3$}
\label{sec:mterm-achieve}
In this section we show how to use the power of adaptive termination in order to circumvent the $1/4$ bound on the noise of~\cite{BR14}.
Below, we provide a protocol that resists noise rate~$1/3-\eps$ in the $\mterm$ model.

\begin{theorem}\label{thm:protocol-third}
For any function $f$ and any $\eps>0$, there exist a protocol $\pi$ for in the $\mterm$ model, 
that resist a noise rate of~$1/3-\eps$.
\end{theorem}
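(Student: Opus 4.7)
The plan is to design a two-phase adaptive protocol in which Alice first transmits her input $x$ under a high-distance error-correcting code, and then Bob transmits his input $y$ using a code whose length he adapts to the amount of noise he appears to have observed in Phase~1. The underlying intuition is a budget-transfer argument: since the relative noise rate is bounded by $\rho=1/3-\eps$ and the denominator of that rate is $n_A+n_B$, every corruption the adversary spends in Phase~1 reduces her Phase~2 budget by one, so Bob can afford a shorter code in Phase~2 if he has observed a lot of noise, and must pay for a longer one if he has observed little.

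Concretely, I would fix a small constant $\eps'<3\eps$, let $\delta_A=\delta_B=1-\eps'$, and use Reed--Solomon codes over a sufficiently large alphabet $\Sigma$ so that Alice's code $\ECC_A$ of length $n_A$ and all of Bob's candidate codes attain these relative distances. In Phase~1 Alice sends $\ECC_A(x)$. In Phase~2 Bob performs bounded-distance decoding to obtain a candidate $\hat x$ (when one exists uniquely) together with $\hat e_1$, the Hamming distance from the received word to the nearest codeword. He then sets
\[
n_B \;=\; \max\!\Big(n_0,\; \bigl\lceil\tfrac{2\rho n_A-2\hat e_1}{\delta_B-2\rho}\bigr\rceil+1\Big),
\]
where $n_0$ is the minimum length needed to encode $y$, and transmits $\ECC_B(y)$ of length $n_B$. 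Alice decodes Bob's transmission to~$\hat y$ and outputs $f(x,\hat y)$; Bob outputs $f(\hat x,y)$.

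The correctness argument splits into three cases at Bob's decoder, using the global noise constraint $e_1+e_2\le\rho(n_A+n_B)$. If Bob decodes correctly, then $\hat e_1=e_1$ and the formula for $n_B$ yields $\delta_B n_B/2>\rho(n_A+n_B)-e_1\ge e_2$, so Alice also decodes correctly and both parties output $f(x,y)$. If Bob decodes to a wrong codeword, the triangle inequality forces $e_1\ge\delta_A n_A/2$ and $e_1+\hat e_1\ge\delta_A n_A$; substituting these bounds together with the formula for $n_B$ into the budget inequality reduces, up to lower-order terms, to
\[
n_A\bigl[\delta_A\delta_B-\rho(2\delta_A+\delta_B)\bigr]\;\le\;\hat e_1\,\bigl[\delta_B-4\rho\bigr].
\]
With $\delta_A=\delta_B=1-\eps'$ the left bracket equals $(1-\eps')(3\eps-\eps')$, strictly positive precisely because $\eps'<3\eps$, while the right bracket equals $-1/3+4\eps-\eps'$, strictly negative for all small $\eps,\eps'$. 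Since a $\Theta(n_A)$ positive quantity cannot be bounded above by a non-positive one, this case is infeasible once $n_A$ is taken large enough to absorb the lower-order $O(1)$ corrections. Finally, if Bob's decoder fails outright ($\hat e_1\ge\delta_A n_A/2$), Bob falls back on $n_B=n_0$, and then $e_1\ge\delta_A n_A/2$ combined with $e_1\le\rho(n_A+n_0)$ forces $n_0\ge(\delta_A/(2\rho)-1)n_A$, which is prevented by taking $n_A$ a sufficiently large constant multiple of $n_0$.

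The main obstacle, beyond the arithmetic above, is coordinating the asymmetric views of the parties inside the $\mterm$ model: Alice does not know $\hat e_1$ and therefore does not know the length $n_B$ that Bob will transmit or when he will terminate. The cleanest fix I envisage is for Bob to prefix his Phase~2 transmission with a short constant-length Reed--Solomon encoding of the value $n_B$ itself; Alice decodes this prefix first, learns $n_B$, listens for exactly $n_B$ further symbols, decodes, and then terminates. The prefix length depends only on $\eps$, so it is absorbed into the asymptotic analysis and does not affect the achievable noise rate.
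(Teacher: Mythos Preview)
Your high-level plan---a two-phase protocol in which Bob adaptively chooses his code length based on the Hamming distance he observes in Phase~1---is exactly the paper's approach, and your three-case budget analysis is essentially the same computation the paper carries out (with a slightly different but workable formula for $n_B$ as a function of~$\hat e_1$).

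The gap is in the coordination mechanism you propose at the end. You suggest that Bob prefix his Phase-2 transmission with a ``short constant-length Reed--Solomon encoding of the value~$n_B$ itself.'' There are two problems. First, $n_B$ is determined by $\hat e_1\in\{0,\dots,n_A\}$ and hence ranges over $\Theta(n_A)$ possible values, so it cannot be described in a block whose length depends only on~$\eps$; discretizing $n_B$ to $O(1/\eps)$ levels could address this, but you would have to redo the case analysis. Second, and fatally, any header of length $L=o(n_A)$---whether $O_\eps(1)$ or $O(\log n_A)$---can be corrupted in its entirety by the adversary at cost at most $L=o(n_A)$, which is negligible relative to her total budget $\rho(n_A+n_B)=\Theta(n_A)$. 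Once Alice decodes a wrong value $n_B'\ne n_B$ from the header, she will either truncate Bob's codeword or extend it with (possibly adversarial) symbols, and a generic Reed--Solomon code for~$y$ gives no decoding guarantee in either situation. The adversary can therefore break correctness while spending essentially nothing on the header and nothing on Bob's actual codeword, leaving her relative noise rate arbitrarily small.

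The paper resolves this not by communicating~$n_B$ but by making it unnecessary to know~$n_B$. Bob's codes are drawn from a nested family $\{\ECC_i\}$ with the \emph{prefix property}: $\ECC_i(y)$ is a prefix of $\ECC_j(y)$ for every $i<j$ (obtained, e.g., by truncating a single long random linear code). Alice simply waits until the maximum possible round and decodes whatever she sees. If Eve shortens Bob's transmission by turning trailing symbols into silence, the truncation is a legitimate codeword of a shorter~$\ECC_i$, and to make Alice err Eve must still invest $(1/2-\eps)$ of that shorter length on top of the deletions; if Eve lengthens the transmission, she pays for the insertions and then $(1/2-\eps)$ of the longer length. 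Either way her cost is no smaller than attacking the true~$n_B$ directly. This nesting trick is the missing ingredient in your construction.
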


\begin{proof}
We assume parties' inputs are in~$\{0,1\}^n$. 
We will use a family of good error correcting codes $\ECC_i: \{0,1\}^n \to \Sigma^{c_in}$ with
$i=1,\ldots, i_{\max}$. Each such code corrects up to $1/2-\eps$ fraction of errors 
while having a constant rate $1/c_i$ and using a constant alphabet $\Sigma$, both of which depend on~$\eps$.
The redundancy of each code increases with $i$, i.e., $c_{i+1} > c_{i}$. 
Moreover, these codes will have the property that for any $x$, $\ECC_i(x)$ is a prefix of $\ECC_{j}(x)$ for any $j>i$. This can easily be done with random linear codes, e.g., by randomly choosing a large generating matrix of size $n\times c_{i_{\max}}n$ and encoding $\ECC_i$ by using a truncated matrix using only the first $c_i$~columns.

Formally, for any~$n$ and $\eps>0$, let $\{\ECC_i\}$ be a family of error correcting codes as described above 
and let $j$ be such that  $ c_j\cdot 4\eps \ge c_1$. Set $I_A =\{1, \ldots,  c_jn \}$ and
$I_B =\{ c_jn+1,\   c_jn +2, \ldots\}$.
\begin{enumerate}
\item Alice encodes her input using $\ECC_j$, and sends the codeword over to Bob in the first $c_jn$ rounds of the protocol.
\item After $c_jn$ rounds, Bob decodes Alice's transmission to obtain~$\tilde x$. Let $t$ be the Hamming distance between the codeword Bob receives and $\ECC_j(\tilde x)$.
\item Bob continues in an adaptive manner:
\begin{enumerate}
	\item if $t <(1/2-\eps)c_jn$ Bob encodes his input using a code $\ECC: \{0,1\}^n \to \{0,1\}^{2c_jn-4t}$. Note that the maximal value $t$ can get is $(1/2-\eps)c_jn$ which makes $2c_jn-4t > 4\eps c_jn \ge c_1n$, so a suitable code can always be found.
	\item\label{step:bobabort} otherwise, Bob aborts. 
	\end{enumerate} 	
\item After completing his transmission, Bob terminates and outputs~$f(\tilde x,y)$.
\item Alice waits until round $3c_jn$ and then decodes Bob's transmission to obtain $\tilde y$ and outputs $f(x,\tilde y)$.
\end{enumerate}

Suppose an instance of the protocol that is not correct, and let us
analyze the noise rate in that given failed instance.
First, note that if Bob aborts at step~\ref{step:bobabort}, the noise rate is clearly larger than $1/3$.
Next, assume Bob decodes a wrong value $\tilde x\ne x$. Note, that the minimal distance of the code is $1-2\eps$, thus given that Bob measures Hamming distance~$t$,  Eve must have made at least $(1-2\eps)c_jn-t$ corruptions. The total communication in this scenario is
$c_jn+2(c_jn-2t)$ which yields a relative noise rate $\frac{(1-2\eps)c_jn-t}{3c_jn-4t}$, with a minimum of~$1/3-O(\eps)$.

On the other hand, if Bob decodes the correct value~$\tilde x = x$, and measures Hamming distance~$t$,  Eve must have made $t$~corruptions at Alice's side. To corrupt Bob's codeword, she must perform at least~$({1/2-\eps})(2c_jn-4t)$ additional corruptions, yielding a relative noise rate at least $\frac{t+ (1/2-\eps)(2c_jn-4t)}{3c_jn-4t}= \frac{(1-2\eps)c_jn-(1-4\eps) t}{3c_jn-4t}$ which also obtains a minimal value of~$1/3-O(\eps)$.

There is still a remaining subtlety of how Alice knows the right code to decode. Surely, if there is no noise, Bob's transmission is delimited by silence. However, if Eve turns  the last few symbols transmitted by Bob into silence, she might cause Alice to decode with the wrong parameters.
This is where we need the prefix property of the code, which keeps a truncated codeword a valid encoding of Bob's input for smaller parameters. 
Eve has no advantage in shortening the codeword: 
if Eve tries to shorten a codeword of $\ECC_i$ into $\ECC_j$ with $j<i$ and then corrupt the shorter codeword,  she will have to corrupt
$(c_i-c_j + (1/2-\eps)c_j)n \ge (1/2-\eps)c_in$ symbols, which only increases her noise rate.
Similarly, if she tries to enlarge $\ECC_i$ into $\ECC_j$ with $j>i$, in order to cause Alice to decode the longer codeword incorrectly, Eve will have to perform at least $(1/2-\eps)c_j$ corruptions which is again more than needed to corrupt the original message sent by Bob. 
\qed
\end{proof}

\subsection{Impossibility bound}

Next, we show that in the general $\mterm$ case, no interactive protocol resists a noise rate of $1/2$ or more. At a high level, the attack proceeds by changing $1/2$ of {\it both} Alice and Bob's messages so that whoever terminates first is completely confused about their partner's input. One must exercise some care to ensure that the attack is well defined, but this high level idea can be formalized, as shown below.
\begin{theorem}
\label{thm:mterm-half}
There exists a function $f$, such that any adaptive protocol $\pi$ 
for $f$ in the $\mterm$ setting, cannot resist noise rate of~$1/2$.
\end{theorem}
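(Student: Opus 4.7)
The plan is to prove the $1/2$ lower bound by exhibiting a function $f$ for which no $\mterm$-protocol can tolerate relative noise rate $1/2$. I would take $f$ to be a function requiring information from both parties, e.g.\ $f(x,y)=x\oplus y$ on $\{0,1\}^2$: Alice cannot compute $f$ without learning something about~$y$, and Bob cannot compute $f$ without learning something about~$x$. The impossibility then follows by exhibiting an adversary that severs one party's information flow to the other while staying within a noise budget of~$1/2$.

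Fix any supposed protocol $\pi$ and consider two mirror adversarial strategies. The first replaces every non-silence symbol transmitted by Bob with a fixed symbol $s^* \in \Sigma \cup \{\silence\}$, so that Alice's view becomes a deterministic function of her input $x$ alone, independent of~$y$; the second is symmetric, replacing all of Alice's non-silence symbols by $s^*$. Under the first strategy, Alice's view does not depend on~$y$, so her output is a function of $x$ alone and must be wrong for at least one of $y\in\{0,1\}$; under the second strategy Bob is symmetrically fooled. Let $n_A^{(i)},n_B^{(i)}$ for $i\in\{1,2\}$ denote the non-silence symbol counts under the two strategies, so the relative noise rates are $n_B^{(1)}/(n_A^{(1)}+n_B^{(1)})$ and $n_A^{(2)}/(n_A^{(2)}+n_B^{(2)})$ respectively.

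The crux of the proof is to show that at least one of these rates is at most $1/2$---equivalently, that one of $n_B^{(1)} \le n_A^{(1)}$ or $n_A^{(2)} \le n_B^{(2)}$ must hold. This is the main obstacle, because adaptive termination allows each party's symbol count to depend heavily on what it receives: under the first strategy, a ``giving up'' Alice might terminate early (shrinking $n_A^{(1)}$) while Bob continues, inflating the rate above~$1/2$, and symmetrically under the second. I would address this through a hybrid/parameterization argument: consider the family of adversaries that corrupts a fraction $\lambda$ of Bob's symbols and a fraction $1-\lambda$ of Alice's, for $\lambda\in[0,1]$, and combine a continuity/averaging argument (the noise rate is a continuous function of $\lambda$) with the observation that for every $\lambda$ at least one of the two parties' views carries no information about its opponent's input. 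Exploiting the fact that in the $\mterm$ model silence sent after termination is not counted toward $\comm^{\text{term}}$, the adversary can additionally ``time'' when each side becomes effectively silent, forcing the denominator of the noise rate to stay balanced. Pushing this through yields an attack of rate $\le 1/2$ in which the first party to terminate halts with a view indistinguishable from that of an input with different $f$-value, contradicting the claimed $1/2$ resilience and establishing the theorem.
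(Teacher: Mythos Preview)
Your proposal has a genuine gap at exactly the point you yourself flag as the crux. The two ``pure'' attacks are fine as far as they go, but your proposed resolution---the $\lambda$-hybrid with a continuity argument---does not work as stated. You write that ``for every $\lambda$ at least one of the two parties' views carries no information about its opponent's input,'' and this is simply false: if the adversary corrupts a $\lambda$-fraction of Bob's symbols and a $(1-\lambda)$-fraction of Alice's, then for intermediate $\lambda$ both parties may still receive substantial information about each other. Continuity of the noise rate in $\lambda$ is also unclear, since termination times (and hence the denominator $\comm^{\text{term}}$) can jump discontinuously as the attack varies. So the interpolation does not bridge the two endpoints in the way you need. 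There is also a smaller issue with your pure attacks: if $s^*\in\Sigma$ (i.e., $s^*\ne\silence$) and you only overwrite Bob's non-silence symbols, Alice still sees Bob's silence pattern, which depends on~$y$; so the view is not independent of~$y$ unless you take $s^*=\silence$ or overwrite everything.

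The paper's proof avoids the hybrid entirely by attacking \emph{both} directions simultaneously with a ``midpoint'' strategy: Alice's transmissions are driven to a string equidistant from her behavior on $x$ and on $x'$, and likewise Bob's to the midpoint between $y$ and $y'$. This single attack is applied to all four inputs $(x,y),(x,y'),(x',y),(x',y')$; one then takes the instance with the earliest termination and argues that the first party to terminate (say Alice on $(x,y)$) sees exactly the same view as in the attack on $(x,y')$, because Bob has not yet terminated on either input and the midpoint construction makes his channel output identical. The midpoint guarantees at most $1/2$ noise in each direction up to that point. The key device you are missing is this combination of (i) a symmetric attack that confuses both parties at once, and (ii) a minimal-termination argument over the four inputs, which sidesteps the need to compare two separate asymmetric attacks.
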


\begin{proof}
Assume $f$ is the identity function on input space $\{0,1\}^n\times \{0,1\}^n$, and consider an adaptive protocol~$\pi$ that computes $f$. We show an attack that causes a relative noise rate of at most $1/2$ and causes at least one of the parties to output the wrong value.

Fix two distinct inputs $(x,y)$ and $(x',y')$. 
Given any input $\xi$ out of the set $\{ (x,y), (x',y), (x,y'), (x',y')\}$ we can define an attack on $\pi(\xi)$. The attack will change \emph{both} parties' transmissions in the following way:
Alice's messages will be changed to the ``middle point'' between what she should send given that her input is~$x$ and what she should send given that her input is~$x'$ (i.e., to a string which has the same Hamming distance from what Alice sends on~$x$ and on~$x'$). At the same time, Bob's messages are changed to the middle point between what he should send given that his input is~$y$ and what he should send given that his input is~$y'$.

Specifically, at each time step, Eve considers the next transmission of Alice on $x$ and on $x'$ (given the transcript so far). If Alice sends the same symbol in both cases, Eve doesn't do anything.  Otherwise, Eve alternates between sending a symbol from Alice's transcript on $x$ and on~$x'$. Note that the attack is well defined even if Alice has already terminated on input $x$ but not on $x'$,\footnote{Recall that once Alice terminates, we assume the symbol $\emptyset$ is being transmitted by the channel.} although we only use the attack until Alice aborts on one of the inputs.  
Corrupting Bob's transmissions is done in a similar way.

Next, consider the termination time of the attack on inputs $\{ (x,y), (x',y), (x,y'), (x',y')\}$. There exists an input whose termination time is minimal. Denote this input by~$\xi^*$ and assume, without loss of generality, that Alice is the party that terminates first when the attack is employed on~$\pi(\xi^*)$.
It follows that when employing the above attack on any of the other three inputs in $\{ (x,y), (x',y), (x,y'), (x',y')\}\setminus\{\xi^*\}$, the termination time of the parties are not smaller than Alice's termination time in the instance $\pi(\xi^*)$ under the same attack. Without loss of generality, assume $\xi^*=(x,y)$.

Finally note that when Alice terminates, she cannot tell whether Bob holds $y$ or $y'$. Indeed, up to the point she terminates, the attack on $\xi^*=(x,y)$ and the attack on $(x,y')$ look exactly the same from Alice's point of view. This is because Bob does not terminate before Alice (for \emph{both his inputs}!), and our attack changes Bob's messages in both instances in a similar way. Thus Alice's view is identical for both Bob's inputs, and she must be wrong at least on one of them.
Note that such an attack causes at most $1/2$ noise in each direction up to the point where Alice terminates (there's no need to continue in the attack after that point). Thus, the total corruption rate is at most~$1/2$.
\qed
\end{proof}

One subtlety that arises from the above proof, is the ability of a party to convey some amount of information by the specific time it terminates. In order to better understand the power of termination yet without allowing the parties to convey information solely by their time of termination, we define the $\mabort$ model,
which is exactly the same as $\mterm$ defined in Section~\ref{sec:mterm-model} above, 
except that if $\min\{\ter_A,\ter_B\} \ne \maxround$ then the parties’ output is defined as an invalid output~$\bot$.

In Appendix~\ref{app:mqwh} we analyze protocols in the $\mabort$ model which are fully utilized: at every round both parties send a single symbol over the channel. We show that $1/4$~is a tight bound on the noise in that case. While the protocol of~\cite{BR14} is enough to resist such a noise level (even without using the adaptivity), 
an impossibility bound of noise~$\ge1/4$ is not implied by previous work. In the appendix we prove the following,
\begin{theorem}\label{thm:impAbort}
There exists a function $f:\{0,1\}^n\times\{0,1\}^n\to \{0,1\}^{2n}$ such that
for any fully utilized adaptive protocol $\pi$ for~$f$ in the $\mabort$ model,
$\pi$ does not resist a noise rate of~$1/4$.
\end{theorem}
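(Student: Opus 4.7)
The plan is to take $f(x,y)=x\|y$, the concatenation function, so that any correct execution forces Bob to recover~$x$ in full; hence an attack that leaves Bob unable to distinguish two candidate inputs $x_0\neq x_1$ for Alice automatically produces an incorrect output on at least one of them. In the fully utilized $\mabort$ setting, each round both parties transmit exactly one symbol in $\Sigma\cup\{\silence\}$, and a correct run is forced to reach the last round $R:=\maxround$ without either party terminating, since early termination yields the invalid output $\bot$. Consequently every correct instance has $\comm\approx 2R$, and it suffices to exhibit an attack of at most $R/2$ Alice-side corruptions that makes $\pi$ err.

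The attack is a Braverman--Rao style hybrid. First I would fix $y$ and distinct $x_0,x_1\in\{0,1\}^n$ and have Eve mentally simulate the two virtual executions $\pi(x_0,y)$ and $\pi(x_1,y)$ in parallel while building a single Alice-side sequence $(\tilde a_1,\dots,\tilde a_R)$ that Bob will see in both. At round $i$ let $a_i^0,a_i^1$ be Alice's would-be transmissions on the two virtual inputs, computed inductively from Bob's messages $b_1,\dots,b_{i-1}$; because Eve never touches Bob and Bob's view is identical across hybrids, his $b_i$, and hence $a_i^0,a_i^1$, are well-defined. Eve picks $\tilde a_i\in\{a_i^0,a_i^1\}$ so as to split the disagreement set $D=\{i:a_i^0\ne a_i^1\}$ as evenly as possible; in the real execution of $\pi(x_0,y)$ she corrupts Alice's transmission exactly on rounds where $\tilde a_i=a_i^1\ne a_i^0$, and symmetrically on $\pi(x_1,y)$. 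The two corruption counts sum to $|D|\le R$, so at least one of the two virtual executions, call it $(x_b,y)$, suffers at most $\lfloor R/2\rfloor$ corruptions. Because Bob's view, his termination time and his output agree across hybrids while $f(x_0,y)\ne f(x_1,y)$, Bob is wrong on $(x_b,y)$.

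The main obstacle will be to check that the noise rate on $(x_b,y)$ really is at most $1/4$ despite adaptive termination: Alice's termination decision depends on her view, which under the attack is a function of Bob's corrupted-view responses rather than of her noiseless messages, so she may terminate earlier than at round $R$, shrinking $\comm^{\text{term}}$ and potentially inflating the ratio. My plan is a case split. First, if neither party terminates early in the attacked execution on $(x_b,y)$, then $\comm^{\text{term}}\approx 2R$ and the ratio is at most $\lfloor R/2\rfloor/(2R)\le 1/4$ immediately. Second, if Alice terminates at some round $t<R$, then $\pi$ already fails by outputting $\bot$; Eve may truncate her corruption schedule at round $t{-}1$, leaving at most $\lfloor(t-1)/2\rfloor$ corruptions, and one argues that Bob --- whose truncated view is still consistent with a possible noiseless run through round $t{-}1$ --- cannot terminate early in a protocol that is correct under small noise, so Bob contributes a further $R-t+1$ symbols to the denominator and the ratio stays $\le (t-1)/(2(t-1)+2R)\le 1/4$ for every $t\in[1,R]$. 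Together the two cases establish that $\pi$ fails at noise rate $1/4$ on $(x_b,y)$, and this case analysis is precisely where the interplay of the fully utilized structure, the $\mabort$ penalty for early termination, and the rigidity of the identity function $f=x\|y$ gives the bound that classical Braverman--Rao arguments do not deliver.
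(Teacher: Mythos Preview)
Your argument contains a genuine logical gap already in case~1. You define $(x_b,y)$ to be the execution with at most $\lfloor R/2\rfloor$ corruptions, and then assert that ``Bob is wrong on $(x_b,y)$''. But indistinguishability only tells you Bob's output is identical on $(x_0,y)$ and $(x_1,y)$, hence wrong on \emph{at least one} of them; this may well be $(x_{1-b},y)$, the execution carrying the \emph{larger} half of the corruptions, and there your noise rate can be as large as $|D|/(2R)$, which is not bounded by $1/4$. To attack whichever input Bob errs on, Eve needs \emph{both} executions to have low noise simultaneously, not just one.

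Case~2 inherits this problem and adds two more. First, you claim that after truncating at round $t{-}1$ only $\lfloor(t-1)/2\rfloor$ corruptions remain; this is a prefix bound and does not follow from ``splitting $D$ as evenly as possible'' in total---you need an alternating (rolling) schedule that keeps the cumulative count below $(j{+}1)/2$ at every round~$j$, which you never establish. Second, you justify ``Bob cannot terminate early'' by asserting his view is ``consistent with a possible noiseless run'', but the hybrid sequence $\tilde a_1,\dots,\tilde a_{t-1}$ need not equal any noiseless transcript on any input, so nothing prevents Bob from terminating and collapsing your denominator. The paper avoids all of this by proving, via a rolling-change lemma and an induction over rounds, that Eve can keep her cumulative corruptions strictly below $1/4$ of the communication \emph{at every round and in both virtual executions}, after first using pigeonhole to secure an initial noiseless segment that absorbs the rounding slack. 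With that per-round invariant, any early termination by either party is automatically a failure at noise $<1/4$, and no case split on termination is needed.
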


\section{Protocols with an Adaptive Order of Communication}\label{sec:dsa}
In this section we extend the power of protocols to adaptively determine the order of speaking as a function of the observed transcript and noise. To this end, 
at every round each party decides whether to send an additional symbol, or to remain silent.
We begin by defining the~$\mmain$ model.

\subsection{The $\mmain$ model}
\label{sec:mmain-model}
Similar to the $\mterm$ model, we assume Alice and Bob wish to compute some function $f: \X\times\Y \to \Z$ where Alice holds some input $x\in \X$ and Bob holds $y\in \Y$. The sets $\X,\Y$ and $\Z$ are assumed to be of finite size. 
We assume a channel with a finite alphabet $\Sigma$ that can be used by either of the parties at any round. 
Parties in this model behave as follows (described for Alice, Bob's behavior is symmetric):

\begin{itemize}
\item In a given round $i$, Alice decides whether to speak or remain silent. If Alice speaks, she sends a message~$a_i \in \Sigma$; if Alice is silent,~$a_i=\silence$. 

\item Eve may corrupt any symbol,  including the silence symbol, transmitted by either party. Thus,  Eve acts upon transmitted symbols via the function
$\ch: \Sigma \cup \{\silence\} \rightarrow \Sigma \cup \{\silence\}$,
conditioned on the parties' input, Eve's random coins and the transcript so far. 
\item The corresponding symbol received by Bob is~$\tilde{a}_i=\ch(a_i)$.
\item We assume the protocol terminates after a finite time. There exists a number $\maxround$ at which both parties terminate and output a value as a function of their input and the communication.
\end{itemize}\vspace{-1ex}
For a specific instance of the protocol we denote the messages sent by the parties
$M=(a_1,b_1,a_2,b_2,\ldots)$ in that instance, and the \emph{Noise Pattern} 
$E=(e_{a_1},e_{b_1},\ldots)$ so that $e_{a_i}=\bot$ if $\ch(a_i)=a_i$ and otherwise, $e_{a_i}=\ch(a_i)$, and similarly for~$e_{b_i}$. We will treat $E$ and $M$ as strings of length $2R_{\max}$ and refer to their $i$-th character as $E_i$ and $M_i$.

\begin{definition} 
For any protocol $\pi$ in the $\mmain$ model, and for any specific instance of the protocol on inputs $(x,y)$ with noise pattern $E$ we define: 
\begin{enumerate}
\item Communication Complexity:
\(
\comm^{\text{adp}}_\pi(x,y,E) \defn |\{ i \le 2\maxround \mid M_i \not= \silence \}|,
\)
where $M$ is the message string observed when running $\pi$ on inputs $(x,y)$ with noise pattern~$E$.

\item Noise Complexity:
\(
\err^{\text{adp}}_\pi(x,y,E) \defn |\{ i \le 2\maxround \mid E_i \ne \bot \}|.
\)

\item Relative Noise Rate: 
\(
\erate^{\text{adp}}_\pi(x,y,E) \defn {\err^{\text{adp}}_\pi(x,y,E) }/{\comm^{\text{adp}}_\pi(x,y,E) }.
\) 
\end{enumerate}
\end{definition}
As before, the protocol is correct if both parties output~$f(x,y)$. The protocol is said to resist $\eps$-fraction of noise (or, a noise rate of~$\eps$) if the protocol is correct (on any input) whenever the relative noise rate is at most~$\eps$. Note that the relative noise rate may exceed~$1$.

\subsection{Resilient Protocols in the $\mmain$ model}
In this section, we study several protocols in the $\mmain$ model 
that achieve better noise resilience than in the robust case. 
The main result of this section is a protocol that tolerates relative noise rates of up to~$2/3$.
\begin{theorem}\label{thm:ProtocolTwoThirds}
Let $\X,\Y,\Z$ be some finite sets.
For any function $f : \X\times \Y \to \Z$ there exists an adaptive protocol~$\pi$ for $f$ in the $\mmain$ model that resists noise rates below~$2/3$.
\end{theorem}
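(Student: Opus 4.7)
The plan is to apply a \emph{silence-encoding} layer on top of the adaptive-length protocol of Theorem~\ref{thm:protocol-third}. The silence encoding forces the adversary to spend at least two channel corruptions to flip any single logical symbol, thereby doubling the underlying noise tolerance from $1/3$ to $2/3 - O(\eps)$ for any~$\eps > 0$.

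First I would define the silence encoding. Fix a logical alphabet $\Sigma'$ of size $q$ and a distinguished non-silent letter $a$. Encode each $\sigma \in \Sigma'$ as the $q$-round pattern with $a$ in position $\sigma$ and $\silence$ in every other position. Any two such patterns differ in exactly two channel positions, so converting one valid encoded symbol into another requires at least two channel corruptions, whereas a single channel corruption leaves an invalid pattern (all-silence, two non-silent entries, or a non-silent entry other than $a$) that the receiver treats as an \emph{erasure}. Each valid encoded symbol contributes exactly one non-silent transmission to the communication complexity~$\comm^{\text{adp}}$, so the silence encoding does not inflate the denominator of the relative noise rate.

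I would then lift the near-MDS prefix-nested ECC family $\{\ECC_i\}$ from Theorem~\ref{thm:protocol-third} to run over the silence-encoded channel with error-and-erasure decoding. A code of distance $D$ decodes whenever $2f + e < D$, and each channel corruption contributes at most~$1$ to $2f + e$ (one corruption causes one erasure; two corruptions are needed for one logical error). Hence the silence-encoded ECC tolerates up to $D - 1$ channel corruptions, twice the $(D-1)/2$ threshold of the underlying ECC. With this in hand I would run the template of Theorem~\ref{thm:protocol-third} verbatim: Alice silence-encodes and sends $\ECC_j(x)$ in a fixed round-block; Bob decodes $\tilde{x}$, measures his weighted noise estimate $T = 2f + e$, picks a response code of logical length $2c_j n - 2T$, silence-encodes and sends his reply, then terminates by going silent; Alice waits the appropriate number of rounds and decodes with matching parameters. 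Re-running the three-case analysis of Theorem~\ref{thm:protocol-third} with every corruption count rescaled by the factor~$2$, the worst-case relative noise rate over all $T \ge 0$ and all three failure scenarios (Bob decodes wrong, Bob decodes right but Alice decodes wrong, Bob aborts via silence) is $(2-O(\eps))/3$, attained essentially at $T = 0$.

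The main obstacle will be handling synchronization and boundary attacks introduced by the fact that silence now appears \emph{inside} codewords. Eve may try to shrink Bob's response by flipping its trailing non-silent rounds to silence (so Alice truncates to the wrong code parameters), or to stretch it by inserting spurious $a$'s into rounds that Bob intends to leave silent (so Alice over-estimates the response length). I would address this exactly as in Theorem~\ref{thm:protocol-third}: fix which round-intervals correspond to which candidate response lengths, preserve the prefix-nested property across the silence encoding, and argue that any such re-synchronization attack costs Eve at least as many channel corruptions as a direct attack on the adaptively chosen code, leaving the $2/3 - O(\eps)$ noise-rate bound intact.
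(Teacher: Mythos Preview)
Your approach is correct but takes a genuinely different route from the paper. The paper does \emph{not} layer silence encoding on top of the ECC-based protocol of Theorem~\ref{thm:protocol-third}; instead it discards the ECC entirely and $k$-silence-encodes the raw input directly. Alice transmits $\sigma$ for $k$ consecutive rounds starting at round $k(i-1)$ inside a window of $k|\X|$ rounds; Bob decodes by locating the slot with the most non-silent symbols, records the margin $t$ between the winner and the runner-up, and replies with a $2t$-silence encoding of his own input inside a window of $2k|\Y|$ rounds. The analysis is a short case split (Bob ambiguous, Bob wrong, Bob right but Alice wrong) with no inner code and no error-and-erasure decoding.

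Your layered construction (per-symbol silence encoding on top of the prefix-nested ECC) buys something real: round complexity linear in the input length~$n$ rather than in $|\X|+|\Y|$, which in the paper's protocol is typically exponential in~$n$. The price is a substantially more delicate analysis than ``rescale every corruption count by~$2$.'' You must (i) work in the mixed error/erasure regime and verify that Bob's observable $T=2t'+e$ lower-bounds the adversary's actual channel expenditure even when Bob decodes \emph{incorrectly} (this needs a triangle-inequality argument against the code's minimum distance, giving $C_A\ge 2D-T$ rather than $C_A\ge T$); and (ii) handle the new ``free-erasure'' extension attack, where Eve inserts a single non-silent symbol far past Bob's termination so that Alice parses a much longer response with many all-silence blocks counted as erasures. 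The latter attack does fail---each insertion costs one corruption but gains at most one unit in $2f+e$ net of the erasure it replaces, and $L^*$ is capped at $2c_jn$---but this requires an explicit accounting, not just an appeal to the prefix property as in Theorem~\ref{thm:protocol-third}. All of these steps go through, so your proof can be completed; just be aware that the paper's own argument is considerably shorter because it sidesteps the ECC altogether.
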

This protocol builds upon the protocol constructed in Theorem \ref{thm:protocol-third} but additionally adds a layer of coding that takes advantage of the partially-utilizing nature of message delivery in this model, which we call \emph{silence encoding}. More formally,
\begin{definition}\label{def:SE1}
Let $X=\{x_1, x_2, ..., x_n\}$ be some finite, totally-ordered set.
The \emph{silence encoding} is a code $\textit{SE}_1: X\to (\Sigma\cup\{\silence\})^n $ that encodes $x_i$ into a string $y_1,...,y_n$ where 
$\forall j\ne i$,   $y_j=\silence$ and $y_i\ne \silence$. \\
\end{definition}
\vspace{-1em}
Intuitively, such an encoding has the property that \emph{two} transmissions must be corrupted in order to make the receiver decode an incorrect message, while only a single symbol is transmitted. This, along with the technique that tolerates relative noise rates of up to~$1/3$ when only the length of the protocol is adaptive, yields the claimed result. See Appendix~\ref{app:freepos} for full details and proof of Theorem~\ref{thm:ProtocolTwoThirds}.

\smallskip
While the protocol of Theorem~\ref{thm:ProtocolTwoThirds} obtains noise rate resilience of~$2/3$ and very small communication complexity, it has double exponential round complexity with respect to the round complexity of the best noiseless protocol.
Our next theorem limits the round complexity to be linear, thus yielding a coding scheme with a non-vanishing rate. Specifically, it shows that for any $\eps>0$ we can emulate any protocol $\pi$ of length~$T$ (defined in the noiseless model) 
by a protocol $\Pi$ in the $\mmain$ model, which takes at most $O(T)$~rounds and resists noise rate of~$1/2-\eps$.

\begin{theorem}\label{thm:ProtocolHalf}
For any constant~$\eps >0$ and for any function~$f$, there exists an interactive protocol in the $\mmain$ model 
with round complexity $O(\comm_f)$, 
that resists a relative noise rate of~$1/2-\eps$.
\end{theorem}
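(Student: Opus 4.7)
The plan is to wrap each transmission of the Braverman--Rao protocol~\cite{BR14} in the silence encoding $\textit{SE}_1$ of Definition~\ref{def:SE1}, thereby forcing the adversary to ``pay twice'' for each effective symbol error at the level of the BR decoder. First I would fix a BR protocol $\pi_{\text{BR}}$ for $f$ with round complexity $R=O(\comm_f)$ over a finite alphabet $\Sigma_0$ of size $k=k(\eps)$, chosen so that the underlying tree code has relative distance at least $1-\eps'$ for a suitable $\eps'=\Theta(\eps)$. A key property I will exploit is the standard error/erasure decoding tradeoff for tree codes: provided the received transcript contains $e$ errors and $s$ erasures with $2e+s<(1-\eps')R$, the decoder still recovers the intended transcript, so in particular $\pi_{\text{BR}}$ computes $f$ correctly whenever the total distance $2e+s$ stays below this threshold.

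Next I would define the adaptive protocol $\Pi$ by replacement: whenever $\pi_{\text{BR}}$ instructs a party to send a symbol $\sigma_i\in\Sigma_0$, that party instead spends the next $k$ channel rounds transmitting $\textit{SE}_1(\sigma_i)$, i.e.\ silence in every position except the $i$-th, where a single agreed-upon non-silent marker is placed. At the receiving end, the $k$-block is decoded to $\sigma_i$ if and only if exactly one non-silent symbol appears and it lies in position~$i$; any other pattern (all-silent, or two or more non-silent symbols) is handed to the BR tree-code decoder as a formal erasure~$\bot$. Because $\pi_{\text{BR}}$ is robust, Alice's and Bob's $k$-blocks for the same BR round take place in parallel, so no synchronization issue arises.

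The third step is the noise accounting. Each $k$-block contributes exactly one non-silent channel symbol, whence $\comm^{\text{adp}}_\Pi$ equals the total number of BR transmissions. Only two types of channel corruption change a block's decoding: silencing the true marker, or planting a new non-silent symbol in a previously silent position; all other corruptions are wasted. Causing a block to decode as an erasure requires at least one such operation, while causing it to decode as a specific wrong symbol $\sigma_j\ne\sigma_i$ requires \emph{both}---at least two corruptions. Consequently, if the silence layer feeds $e$ errors and $s$ erasures into the BR decoder, the adversary has spent at least $2e+s$ channel corruptions, so $2e+s\le\err^{\text{adp}}_\Pi$. Whenever the relative noise rate is strictly below $1/2-\eps$, we have $\err^{\text{adp}}_\Pi<(1/2-\eps)\comm^{\text{adp}}_\Pi$, and tuning $\eps'$ appropriately gives $2e+s<(1-\eps')R$, so the BR decoder correctly recovers the transcript and both parties output $f(x,y)$.

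The round complexity of $\Pi$ is $k\cdot R = O_\eps(\comm_f)$, which is linear in $\comm_f$ as required. The main obstacle I anticipate is confirming that the interactive Braverman--Rao decoder---which repeatedly decodes partial transcripts while the protocol is still running---truly enjoys the $2e+s<d$ error/erasure tradeoff rather than merely the $e<d/2$ pure-error tradeoff. This should follow by a standard modification of the tree-code decoding procedure in which erased positions simply do not contribute to the distance measurement, but it is the step that warrants the most careful verification, and one must check that allowing erasures does not introduce new confusable transcripts that cost the adversary less than one corruption per unit of distance.
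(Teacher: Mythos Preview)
Your proposal is correct and follows essentially the same approach as the paper: wrap each Braverman--Rao transmission in $\textit{SE}_1$, observe that at the silence layer an erasure costs the adversary at least one corruption and a substitution at least two, and then invoke an error/erasure tradeoff for the BR simulation to conclude. The ``main obstacle'' you flag---that the interactive BR decoder, not just a one-shot tree-code decoder, must satisfy the $2e+s$ tradeoff---is exactly the technical content the paper isolates as Lemma~\ref{lem:Nhigh} (imported from~\cite{FGOS15}), which shows that if the simulation fails then $\mathcal{N}(1,N)\ge(1-\eps)^2N$ where $\mathcal{N}$ weights errors by~$2$ and erasures by~$1$; your anticipated verification is precisely that lemma's proof.
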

The protocol follows the emulation technique set forth by Braverman and Rao~\cite{BR14}, and requires a generalized analysis for channels with errors and erasures as performed in~\cite{FGOS15} albeit for a completely different setting. The key insight is that silence encoding forces the adversary to pay twice for making an error (or otherwise to cause ``only'' an erasure). This allows doubling the maximal noise rate the protocol resists.
The  proof appears in Appendix~\ref{app:mainCRprotocol}.

\smallskip
Finally, we extend the model by allowing the parties to share a random string, unknown to the adversary.
We show that shared randomness setup 
allows the relative noise rate to go as high as~$1-\eps$.
Formally, 
\begin{theorem}\label{thm:main-shared}
For any  small enough constants $\eps >0$ and for any function~$f$, 
there exists an interactive protocol in the $\mmain$~model 
with round complexity $O(\comm_f)$ such that,
if the adversarial relative corruption rate is at most~$1-\eps$,
the protocol correctly computes~$f$ with overwhelming success probability over the choice of the shared random string.
\end{theorem}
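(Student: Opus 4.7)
The plan is to emulate any noiseless protocol for $f$ symbol-by-symbol, using an outer adaptive-retransmission loop layered over an inner Blueberry-style~\cite{FGOS15} randomization that uses the shared random string for error detection. First I would fix an optimal noiseless protocol $\pi$ for $f$ of length $T=\comm_f$ and emulate its symbols one at a time over the $\mmain$ channel. To transmit the next emulated symbol $\sigma\in\Sigma$, the current speaker applies a fresh random permutation $\rho$ of $\Sigma$ drawn from the shared randomness, sends $\rho(\sigma)$, and the listener inverts $\rho$: if the decoded value is a legal alphabet symbol, it replies with a Blueberry-encoded ACK; otherwise it replies with a NACK (using an independent fresh permutation). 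On ACK both parties advance the emulation index; on NACK the sender retransmits. The total round budget is capped at $N=O(T/\eps)$, with silence padding any unused rounds. Adaptivity in the $\mmain$ model is essential: the identity of the speaker in each round is determined by the current emulation index, which depends on the pattern of ACKs and NACKs so far.

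Correctness would follow from two bounds. First, because $\rho$ is hidden from the adversary, any substitution of a transmitted symbol lands on a legal alphabet symbol under $\rho^{-1}$ with probability at most $1/|\Sigma|$; choosing $|\Sigma|=\mathrm{poly}(1/\eps)$ and union-bounding over the $O(N)$ corruption attempts, the number of undetected substitutions (in payload or in ACK/NACK) is zero except with probability exponentially small in the security parameter. Second, every other corruption merely forces a retransmission, and each such retransmission costs the adversary at least one unit of budget. Thus against relative noise rate at most $1-\eps$, at most $(1-\eps)\comm^{\text{adp}}_\pi$ slots are turned into retransmissions, leaving at least $\eps\,\comm^{\text{adp}}_\pi \ge T$ slots that carry emulated symbols successfully, which is enough to finish $\pi$. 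Since $\comm^{\text{adp}}_\pi \le N = O(T)$, the round complexity is $O(\comm_f)$ as claimed, and one verifies that the denominator of $\erate^{\text{adp}}_\pi$ is not inflated away by silence since every retransmission contains a non-silent payload.

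The main obstacle will be the ACK/NACK pipeline: a single adversarial flip of a NACK into an ACK causes the two parties' emulation indices to desynchronize silently, which is catastrophic in the $\mmain$ model because from that point on the very notion of whose turn it is to speak diverges between the parties. This is handled by routing ACK/NACK through the same Blueberry-randomized encoding, so that forging an ACK is itself an undetected-substitution event bounded by the same union bound above. To guard against the low-probability surviving desynchronization events, I would piggyback on each message a short running digest (keyed by additional shared randomness) of the emulated transcript so far; the listener checks consistency of the digest with its own view and, upon a mismatch, triggers a lightweight rollback protocol that backs up both parties to the last agreed index. The rollback is invoked only on the negligible-probability bad event, so it contributes at most an $o(1)$ overhead to the round count and does not affect the $1-\eps$ noise threshold.
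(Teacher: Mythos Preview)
Your accounting in the second paragraph is the fatal gap. You claim that each detected corruption ``merely forces a retransmission'' and hence at most $(1-\eps)\comm^{\text{adp}}_\pi$ of the slots are retransmissions, leaving $\eps\,\comm^{\text{adp}}_\pi\ge T$ successful payload slots. But $\comm^{\text{adp}}_\pi$ counts \emph{all} non-silent symbols, and in your protocol every one-symbol payload is paired with a one-symbol Blueberry-encoded ACK/NACK. Consider the attack that simply deletes each payload symbol (replaces it by~$\silence$): the listener sees nothing, sends NACK, the sender retransmits, and the adversary deletes again. After $m$ attempts the communication is $2m$ while the number of corruptions is $m$, so the relative noise rate is exactly $1/2$, yet not a single emulated symbol of~$\pi$ has been delivered. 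Your scheme therefore collapses already at noise rate~$1/2$, not~$1-\eps$; the sentence ``at most $(1-\eps)\comm^{\text{adp}}_\pi$ slots are turned into retransmissions'' silently pretends the feedback symbols are free.

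The paper's construction repairs precisely this imbalance. Each payload is sent via a $k$-silence encoding with $k=\eps^{-1}$ non-silent symbols, while the repeat-request uses a single symbol (and there is \emph{no} explicit ACK: silence means ``received''). Erasing one payload now costs the adversary $k$ corruptions against $k{+}1$ units of communication, so any epoch the adversary manages to blank out carries noise rate $k/(k{+}1)\approx 1-\eps$. Moreover, the paper does not emulate $\pi$ directly: it runs the Braverman--Rao simulation, caps each epoch at $t$ retransmissions, declares the epoch~$\bot$ if all fail, and lets the BR layer absorb these erasures together with the rare Blueberry misses via Lemma~\ref{lem:Nhigh}. This makes your digest/rollback layer unnecessary; as written, that layer is also underspecified, since \emph{deletion} of an ACK or NACK (as opposed to forging NACK$\to$ACK) is not a low-probability Blueberry event and already desynchronizes the emulation indices under your handling. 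A minor technical point: a random \emph{permutation} of~$\Sigma$ provides no error detection, since every received symbol has a legal preimage; the Blueberry code is a random injection into a strictly larger channel alphabet.
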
 

The proof appears in Appendix~\ref{app:protocolSharedRand}. At a high level, the main idea is to adaptively repeat transmissions that were corrupted by the adversary. This turns each transmission into a varying-length message whose length (i.e., the number of repetitions) is determined by the relative noise at that message. This forces Eve to spend more and more of her budget in order to corrupt a single transmission, since she needs to corrupt all the repetitions that appear in a single message. The shared randomness is used as a means of detecting corruptions (similar to~\cite{FGOS15}), converting most of Eve's noise into  easy to handle \emph{erasures}.
Each detected corruption is replaced with an erasure mark and treated accordingly.  
It is immediate then, that the same resilience of~$1-\eps$ holds for protocols over the \emph{erasure channel}, even when no preshared randomness is available: 
such channels can only make ``erasures'' to begin with,
so there is no need for preshared randomness in order to detect corruptions.
\begin{corollary}\label{cor:erasure}
For any  small enough constant~$\eps >0$ and for any function~$f$, there exists an interactive protocol in the $\mmain$~model over an erasure channel
that has round complexity $O(\comm_f)$ and that correctly computes~$f$ as long as 
the adversarial relative erasure rate is at most~$1-\eps$.
\end{corollary}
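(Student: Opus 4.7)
The plan is to take the protocol constructed in the proof of Theorem~\ref{thm:main-shared} and observe that when the channel is restricted to erasures the preshared randomness becomes redundant. Recall that in Theorem~\ref{thm:main-shared} the shared random string plays exactly one role: it keys a Blueberry-style code layer whose purpose is to detect corruptions, converting most of Eve's errors into erasures that the adaptive repetition skeleton already knows how to absorb. Once corruptions are flagged, the surrounding machinery retransmits each flagged symbol, and a short confirmation protocol tells the sender when a clean copy has landed. So the first step is to identify this detection layer as the unique randomness-consuming component of the scheme.

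Next I would instantiate the protocol over an erasure channel, where by definition every transmission is either delivered intact or replaced by the erasure mark~$\bot$. In this setting every ``corruption'' is automatically an erasure, and therefore nothing needs to be detected: I would strip out the Blueberry wrapper (treat it as the identity on~$\Sigma$) while leaving the adaptive retransmit-until-acknowledged skeleton from Theorem~\ref{thm:main-shared} otherwise untouched. What needs checking is that the invariants driving that skeleton --- that an unerased received symbol equals the transmitted one, that each erasure costs Eve one unit of budget, and that acknowledgements are handled adaptively in the same way --- continue to hold, now deterministically rather than with high probability over the hidden key.

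Finally I would repeat the budget accounting from Theorem~\ref{thm:main-shared} verbatim: an erasure rate of~$1-\eps$ leaves at least an $\eps$-fraction of transmissions intact, and the parameters can be chosen so that this surplus suffices to emulate the noiseless protocol for~$f$ within $O(\comm_f)$ rounds in the $\mmain$ model. The main obstacle --- and really the only place any care is needed --- is to confirm that no step in the Theorem~\ref{thm:main-shared} analysis secretly uses the randomness beyond forgery detection (for instance through a union bound over intermediate events or an information-theoretic hiding argument against Eve's adaptive choices). Since the Blueberry layer is the unique randomness-consuming piece, I expect this verification to be routine, and the corollary to follow immediately.
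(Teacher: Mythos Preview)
Your proposal is correct and takes essentially the same approach as the paper. The paper does not give a standalone proof of the corollary; it simply remarks in the paragraph preceding it that the shared randomness in Theorem~\ref{thm:main-shared} is used solely to detect corruptions via the Blueberry layer, and that on an erasure channel such detection is unnecessary because every adversarial action is already an erasure --- exactly the reduction you outline.
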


\goodbreak

\section*{Acknowledgements}
We would like to thank Hemanta Maji and Klim Efremenko for useful discussions.

Research supported in part from a DARPA/ONR PROCEED award, NSF Frontier Award 1413955, NSF grants
1228984, 1136174, 1118096, and 1065276, a Xerox
Faculty Research Award, a Google Faculty Research Award, an equipment
grant from Intel, and an Okawa Foundation Research Grant. This
material is based upon work supported by the Defense Advanced Research
Projects Agency through the U.S. Office of Naval Research under
Contract N00014-11-1-0389. The views expressed are those of the author
and do not reflect the official
policy or position of the Department of Defense, the National Science
Foundation, or the U.S. Government.

\bibliographystyle{alpha}
\bibliography{coding-short}

\newcommand{\etalchar}[1]{$^{#1}$}
\begin{thebibliography}{BNT{\etalchar{+}}14}

\bibitem[ABE{\etalchar{+}}15]{ABEGH15}
Noga Alon, Mark Braverman, Klim Efremenko, Ran Gelles, and Bernhard Haeupler.
\newblock Reliable communication over highly connected noisy networks.
\newblock Electronic Colloquium on Computational Complexity (ECCC), TR15-014,
  2015.

\bibitem[BE14]{BE14}
Mark Braverman and Klim Efremenko.
\newblock List and unique coding for interactive communication in the presence
  of adversarial noise.
\newblock FOCS, pages 236--245, 2014.

\bibitem[BK12]{BK12}
Zvika Brakerski and Yael~Tauman Kalai.
\newblock Efficient interactive coding against adversarial noise.
\newblock FOCS, pages 160--166, 2012.

\bibitem[BKN14]{BKN14}
Zvika Brakerski, Yael~Tauman Kalai, and Moni Naor.
\newblock Fast interactive coding against adversarial noise.
\newblock {\em J. {ACM}}, 61(6):35, 2014.

\bibitem[BN13]{BN13}
Zvika Brakerski and Moni Naor.
\newblock Fast algorithms for interactive coding.
\newblock SODA, pages 443--456, 2013.

\bibitem[BNT{\etalchar{+}}14]{BNTTU14}
G.~Brassard, A.~Nayak, A.~Tapp, D.~Touchette, and F.~Unger.
\newblock Noisy interactive quantum communication.
\newblock FOCS, pages 296--305, 2014.

\bibitem[BR11]{BR11}
Mark Braverman and Anup Rao.
\newblock Towards coding for maximum errors in interactive communication.
\newblock STOC, pages 159--166, 2011.

\bibitem[BR14]{BR14}
Mark Braverman and Anup Rao.
\newblock Toward coding for maximum errors in interactive communication.
\newblock {\em Information Theory, IEEE Transactions on}, 60(11):7248--7255,
  Nov 2014.

\bibitem[Bra12]{braverman12}
Mark Braverman.
\newblock Towards deterministic tree code constructions.
\newblock ITCS, pages 161--167, 2012.

\bibitem[CPT13]{CPT13}
Kai-Min Chung, Rafael Pass, and Sidharth Telang.
\newblock Knowledge-preserving interactive coding.
\newblock FOCS, pages 449--458, 2013.

\bibitem[DFO10]{DFO10}
Anand~K. Dhulipala, Christina Fragouli, and Alon Orlitsky.
\newblock Silence-based communication.
\newblock {\em Information Theory, IEEE Transactions on}, 56(1):350--366, 2010.

\bibitem[EGH15]{EGH15}
Klim Efremenko, Ran Gelles, and Bernhard Haeupler.
\newblock Maximal noise in interactive communication over erasure channels and
  channels with feedback.
\newblock ITCS, pages 11--20, 2015.

\bibitem[FGOS13]{FGOS13}
Matthew Franklin, Ran Gelles, Rafail Ostrovsky, and Leonard~J. Schulman.
\newblock Optimal coding for streaming authentication and interactive
  communication.
\newblock CRYPTO, pages 258--276. 2013.

\bibitem[FGOS15]{FGOS15}
Matthew Franklin, Ran Gelles, Rafail Ostrovsky, and Leonard~J. Schulman.
\newblock Optimal coding for streaming authentication and interactive
  communication.
\newblock {\em Information Theory, IEEE Transactions on}, 61(1):133--145, Jan
  2015.

\bibitem[GH14]{GH14}
Mohsen Ghaffari and Bernhard Haeupler.
\newblock {Optimal Error Rates for Interactive Coding II: Efficiency and List
  Decoding}.
\newblock FOCS, pages 394--403, 2014.

\bibitem[GH15]{GH15}
Ran Gelles and Bernhard Haeupler.
\newblock Capacity of interactive communication over erasure channels and
  channels with feedback.
\newblock SODA, pages 1296--1311, 2015.

\bibitem[GHS14]{GHS14}
Mohsen Ghaffari, Bernhard Haeupler, and Madhu Sudan.
\newblock Optimal error rates for interactive coding {I}: Adaptivity and other
  settings.
\newblock STOC, pages 794--803, 2014.

\bibitem[GMS11]{GMS11}
Ran Gelles, Ankur Moitra, and Amit Sahai.
\newblock Efficient and explicit coding for interactive communication.
\newblock FOCS, pages 768--777, 2011.

\bibitem[GMS14]{GMS14}
Ran Gelles, Ankur Moitra, and Amit Sahai.
\newblock Efficient coding for interactive communication.
\newblock {\em Information Theory, IEEE Transactions on}, 60(3):1899--1913,
  March 2014.

\bibitem[GSW14]{GSW14}
Ran Gelles, Amit Sahai, and Akshay Wadia.
\newblock Private interactive communication across an adversarial channel.
\newblock ITCS, pages 135--144, 2014.

\bibitem[Hae14]{Haeupler14}
Bernhard Haeupler.
\newblock Interactive channel capacity revisited.
\newblock FOCS, pages 226--235, 2014.

\bibitem[HS14]{HS14}
William~M. Hoza and Leonard~J. Schulman.
\newblock The adversarial noise threshold for distributed protocols, 2014.
\newblock ArXiv:1412.8097.

\bibitem[IW10]{IW10}
Russell Impagliazzo and Ryan Williams.
\newblock Communication complexity with synchronized clocks.
\newblock CCC, pages 259--269, 2010.

\bibitem[JKL15]{JKL15}
Abhishek Jain, Yael~Tauman Kalai, and Allison Lewko.
\newblock Interactive coding for multiparty protocols.
\newblock In {\em Proceedings of the 6th Conference on Innovations in
  Theoretical Computer Science, ITCS '15}, pages 1--10, 2015.

\bibitem[KR13]{KR13}
Gillat Kol and Ran Raz.
\newblock Interactive channel capacity.
\newblock STOC, pages 715--724, 2013.

\bibitem[MS14]{MS14}
Cristopher Moore and Leonard~J. Schulman.
\newblock Tree codes and a conjecture on exponential sums.
\newblock ITCS, pages 145--154, 2014.

\bibitem[Pan13]{Pankratov13}
Denis Pankratov.
\newblock On the power of feedback in interactive channels.
\newblock [Online:]
  \url{http://people.cs.uchicago.edu/~pankratov/papers/feedback.pdf}, 2013.

\bibitem[Pec06]{peczarski06}
Marcin Peczarski.
\newblock {An improvement of the tree code construction}.
\newblock {\em Information Processing Letters}, 99(3):92--95, 2006.

\bibitem[RS94]{RS94}
Sridhar Rajagopalan and Leonard Schulman.
\newblock A coding theorem for distributed computation.
\newblock In {\em STOC '94: Proceedings of the twenty-sixth annual ACM
  symposium on Theory of computing}, pages 790--799, New York, NY, USA, 1994.
  ACM.

\bibitem[Sch92]{schulman92}
Leonard~J. Schulman.
\newblock Communication on noisy channels: a coding theorem for computation.
\newblock FOCS, pages 724--733, Los Alamitos, CA, USA, 1992. IEEE Computer
  Society.

\bibitem[Sch93]{schulman93}
Leonard~J. Schulman.
\newblock Deterministic coding for interactive communication.
\newblock STOC, pages 747--756, 1993.

\bibitem[Sch96]{schulman96}
Leonard~J. Schulman.
\newblock {Coding for interactive communication}.
\newblock {\em IEEE Transactions on Information Theory}, 42(6):1745--1756,
  1996.

\bibitem[Sha48]{shannon48}
Claude~E. Shannon.
\newblock {A mathematical theory of communication}.
\newblock {\em ACM SIGMOBILE Mobile Computing and Communications Review},
  5(1):3--55, 2001. Originally appeared in \textit{Bell System Tech. J.}
  27:379--423, 623--656, 1948.

\bibitem[Yao79]{Yao79}
Andrew Chi-Chih Yao.
\newblock Some complexity questions related to distributive
  computing(preliminary report).
\newblock In {\em Proceedings of the Eleventh Annual ACM Symposium on Theory of
  Computing}, STOC '79, pages 209--213, New York, NY, USA, 1979. ACM.

\end{thebibliography}

\appendix
\section*{Appendix}

\section{The $\mabort$ model: impossibility bounds}
\label{app:mqwh}
In this section we study upper (impossibility) bounds on the admissible noise in the $\mabort$ model. 
We consider \textit{fully utilized} protocols in which both parties send a symbol at every round (i.e., $I_A=I_B=\mathbb{N}$). In this setting we show an impossibility bound of~$1/4$ on the amount of tolerable noise, matching the achievable resilience of protocols in this setting~\cite{BR14}.
Specifically, we provide an adversarial strategy that always wins with error rate~$<1/4$. Note that Braverman and Rao \cite{BR14} showed a similar result for non-adaptive protocols. Informally speaking, their proof goes along the following lines: Eve picks the player, say Bob, who speaks for fewer slots, and changes half his messages so that the first half corresponds to input~$y$ while the second half corresponds to~$y'$. Now, Eve's noise rate is at most~$1/4$, and Alice cannot tell whether Bob's input is $y$ or~$y'$ and cannot output the correct value.

The above strategy does not carry over to the $\mabort$ model. Specifically, the above attack is not \emph{well defined}. Indeed, Eve can inject messages in the first half of the attack, by running Bob's part of~$\pi$ on the input~$y$. However, when Eve wishes to switch to~$y'$, she now needs to run $\pi$ on input $y'$ \emph{given the transcript so far}, say, given $\mathsf{tr}(y)$. It is possible that $\pi(\cdot,y')$ conditioned on $\mathsf{tr}(y)$ is not defined, for example upon occurrence of $\mathsf{tr}(y)$ given input $y$, Bob may have  already terminated and Eve cannot conduct the second part of the attack.

We address this issue by demonstrating a more sophisticated attack
that does not abruptly switch $y$ to $y'$ after half the messages, but rather gradually moves from $y$ towards~$y'$. 
That is, at any time during the protocol the adversary's relative noise rate is at most~$1/4$, 
therefore the parties' ability to prematurely terminate doesn't give them any power%
.\footnote{In fact, the bound we obtain is $1/4-O(1/k)$ where $k$~is the round complexity of~$\pi$. Therefore, the only hope to obtain protocols that resist any noise rate strictly less than~$1/4$ is having infinite protocols. This is however beyond the scope of this work, and is left as an open question.} Recall the statement of Theorem~\ref{thm:impAbort},

\medskip
\noindent
\textbf{Theorem~\ref{thm:impAbort}.}
\textit{
There exists a function $f:\{0,1\}^n\times\{0,1\}^n\to \{0,1\}^{2n}$ such that
for any adaptive protocol $\pi$ for~$f$ in the fully utilized $\mabort$~model,
$\pi$ does not resist a noise rate of~$1/4$.}
\medskip

Before we prove the theorem we show the following technical lemma, which is the main idea of our proof.
Denote the Hamming distance of two strings by $\dist(\cdot,\cdot)$. In order to cause ambiguity when decoding a codeword from $\{x,y\}$,  one needs to corrupt at most $(\dist(x,y)+1)/2$ symbols, and this can be done in a ``rolling'' manner. Formally,
\begin{lemma}\label{lem:rollingChange}
Assume $\mathbb{F}$ is some finite field. 
For any two strings $x,y\in \mathbb{F}^n$ %
there exists a string~$z\in \mathbb{F}^n$ such that
\[
\dist(x+z,x) \ge \dist (x+z,y)
\]
and for any $j\le n$, $w(z_1,\ldots,z_j) \le \frac{j+1}2$,
where $w(\cdot)$ is the Hamming weight function.
\end{lemma}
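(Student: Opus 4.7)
The idea is to reduce the lemma to a combinatorial statement about selecting roughly half the disagreement positions of $x$ and $y$ in a ``balanced'' way. Let $u = y - x \in \mathbb{F}^n$, and let $D = \{i : u_i \neq 0\}$ be the set of disagreement coordinates between $x$ and $y$, with $d = |D| = \dist(x,y)$. I will restrict attention to strings $z$ supported on $D$ whose nonzero entries agree with $u$. Concretely, for any subset $S \subseteq D$ define $z^{(S)}$ by $z^{(S)}_i = u_i$ for $i \in S$ and $z^{(S)}_i = 0$ otherwise. Then $\dist(x + z^{(S)}, x) = w(z^{(S)}) = |S|$, while $\dist(x + z^{(S)}, y) = \dist(z^{(S)}, u) = |D \setminus S| = d - |S|$, since $z^{(S)}$ and $u$ agree on $S$ and on $[n] \setminus D$, and disagree on $D \setminus S$. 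So the distance inequality $\dist(x+z,x) \ge \dist(x+z,y)$ becomes simply $|S| \ge d - |S|$, i.e.\ $|S| \ge \lceil d/2 \rceil$.

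The remaining task is therefore to exhibit $S \subseteq D$ with $|S| \ge \lceil d/2 \rceil$ and $|S \cap [j]| \le (j+1)/2$ for every $j \le n$. My proposed construction is to enumerate $D = \{i_1 < i_2 < \cdots < i_d\}$ in increasing order and let $S = \{i_1, i_3, i_5, \ldots\}$ consist of the odd-indexed elements of $D$. Then $|S| = \lceil d/2 \rceil$ as required. For the prefix bound, if $|D \cap [j]| = m$ then by construction $|S \cap [j]| = \lceil m/2 \rceil \le \lceil j/2 \rceil \le (j+1)/2$, where the first inequality uses $m \le j$ and the second holds for every nonnegative integer $j$. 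Taking $z = z^{(S)}$ completes the argument.

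There is essentially no main obstacle here: the only subtlety is realizing that the ``rolling'' prefix constraint is compatible with picking $\lceil d/2 \rceil$ coordinates inside $D$, which is handled by spreading the picks as evenly as possible through $D$ (every other element). One small sanity check worth spelling out in the final write-up is that working on positions outside $D$ cannot help—any $i \notin D$ has $u_i = 0$, so a nonzero $z_i$ would add $1$ to $\dist(x+z, x)$ \emph{and} to $\dist(x+z, y)$ simultaneously while consuming weight budget in prefixes, so restricting to $S \subseteq D$ is without loss of generality.
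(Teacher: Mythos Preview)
Your proof is correct and is cleaner than the paper's argument. The paper proceeds by induction on the Hamming distance $d=\dist(x,y)$: it first treats even $d$ (base case $d=2$ picks the \emph{second} disagreement coordinate; the inductive step splits off the last two disagreements), obtaining the slightly sharper prefix bound $w(z_1,\ldots,z_j)\le j/2$, and then handles odd $d$ by applying the even case to the first $d-1$ disagreements and flipping one more coordinate at the end. Unwinding the induction, the paper effectively selects the \emph{even}-indexed elements of $D$ (plus $i_d$ when $d$ is odd), whereas you select the odd-indexed ones directly. Your choice trades the sharper even-$d$ bound, which the lemma does not require, for a one-line non-inductive argument; the paper's route would matter only if the application needed $j/2$ rather than $(j+1)/2$ on even-distance pairs.
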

\begin{proof}
We begin by proving that when $\dist(x,y)$ is even, a more restricted form of the lemma holds, 
namely, that for any $j\le n$, $w(z_1,\ldots,z_j) \le \frac{j}2$.
We prove this by induction on the hamming distance $d=\dist(x,y)$.
The case of $d=2$ is easily obtained by setting $z$ to be all zero except for the second index where $x$ and $y$ differ. Now assume the hypothesis holds for an even $d$ and consider $d+2$.
Split $x=x_1x_2$ and $y=y_1y_2$ such that $|x_1|=|y_1|$ and $\dist(x_1,y_1)=d$ (thus $\dist(x_2,y_2)=2$). Let $u,v$ be the strings guaranteed by the induction hypothesis for $x_1,y_1$ and $x_2,y_2$ respectively, and set $z=uv$.

By the way we construct $z$, it holds
\(
\dist(x+z,x) \ge \dist (x+z,y).
\)
Moreover, for any $j<|x_1|$ we know that 
$w(z_1,\ldots,z_j) \le \frac{j}2$, by the induction hypothesis. Note that $w(v)$ is at most $1$, and that $v_1=0$ by the construction of the base case.
Then it is clear that the claim holds for $j=|x_1|+1$; 
for any $j>|x_1|+1$ we get
$w(z_1,\ldots,z_j) = w(u)+w(v_1,...,v_{j-|x_1|+1}) \le \frac{|x_1|}{2} + 1  \le \frac{j}{2}$.

Completing the proof of the original lemma (where $d$ can be odd and the weight is $\le \frac{j+1}2$) 
is immediate. If~$d$ is odd we construct~$z$ by using the induction lemma (of the even case) over the prefix  with hamming distance $d-1$ and change at most a single additional index, which is located \emph{after} that prefix.
Assume that the prefix is of length $n_{\rm{prefix}}$. 
The claim holds for any $j \le n_{\rm{prefix}}$ due to the induction hypothesis.
For any $j> n_{\rm{prefix}}$ it holds that 
\[
w(z_1,\ldots,z_{j})= w(z_1,\ldots,z_{n_{\rm{prefix}}})+w(z_{n_{\rm{prefix}}+1},\ldots, z_j) \le \frac{n_{\rm{prefix}}}2+1 \le \frac{j+1}{2}.
\]
\qed
\end{proof}

We now continue to proving that $1/4$ is an upper bound of the permissible noise rate.
\begin{proof}(\textbf{Theorem~\ref{thm:impAbort}.})
Let $f$ be such that for any $y,y'$, $f(x,y)\ne f(x,y')$, for instance, the identity function $f(x,y)=(x,y)$, and let~$\pi$ be any adaptive protocol for~$f$.
Consider the transcripts of $\pi$ up to round 10.\footnote{10 is obviously arbitrary, and has the sole purpose of avoiding the edge case in which Eve corrupts the first couple of rounds, possibly causing (a relative) noise rate higher than~$1/4$.}
By the pigeon-hole principle for large enough~$n$, there must be $y,y'$ that for some $x$ produce the same transcript up to round~10. 
Let  $m = \min \left\{ TER_A(x,y), TER_A(x,y')\right\}$.

The basic idea is the following.
Assuming no noise, 
let $t$ be Bob's messages in $\pi(x,y)$ up to round $m$ 
and $t'$ be Bob's messages in $\pi(x,y')$ up to round $m$.
Using Lemma~\ref{lem:rollingChange} 
Eve can change $t$ into $t+z$ (starting from round 10), so that  $\dist(t+z,t)\ge \dist(t+z,t')$ and Eve's relative noise rate never exceeds~$1/4$. 
Furthermore, Eve can change $t'$ into $t'+z'=t+z$ and also in this case Eve's relative noise rate  never exceeds~$1/4$: the string $z'_{11},...,z'_{m}$ must satisfy, for any index $10< j \le m$, that $w(z'_{11},\ldots,z'_{j})\le \frac{j+1}2$ (this follows from the way we construct $z$ and the fact that $\dist(t'+z',t)=\dist(t+z,t)\ge \dist(t+z,t')=\dist(t'+z',t')$). Thus, the relative noise rate made by Eve up to round~$j$  is at most $\frac{(j-10+1)/2}{2j} <1/4$. 
The same argument should be repeated until we reach the bound on the round complexity $\ter_\pi$, which we formally prove in Lemma~\ref{lem:adaptiveBob}, yet before getting to that we should more carefully examine the actions of both parties during this attack.

Consider Alice actions when the messages she receives are $t+z=t'+z'$.
She can either (i) abort (output $\bot$), (ii) output $f(x,y)$ or (iii) output $f(x,y')$, however, her actions are independent of Bob's input (since her view is independent of Bob's input). Assuming Eve indeed never goes beyond $1/4$, it is clear that Eve always wins in case (i). For case (ii) Eve wins on input $(x,y')$ and for case (iii), Eve wins on input $(x,y)$.

However, while in the above analysis Alice's actions must be the same between the two cases of $t\to t+z$ and $t'\to t'+z'$, 
this is not the case for Bob. 
We must be more careful and consider Bob's possible adaptive reaction to  errors made by Eve. In other words, Bob, noticing Alice's replies, may either abort, or send totally different messages so that his transcript is neither $t$ nor~$t'$. We now show that even in this case Eve has a way to construct $z,z'$ and never exceed a relative noise rate of~$1/4$.

\begin{lemma}\label{lem:adaptiveBob}
Assume $\pi$ takes $k$ rounds. 
Eve always has a way to change (only) messages sent by Bob, so that
Alice's view is identical between an instance of~$\pi(x,y)$ and of~$\pi(x,y')$, while Eve
corrupts no message up to round~$10$ 
and at most $(k-9)/2$ messages between rounds $11$ and~$k$ (incl.).
\end{lemma}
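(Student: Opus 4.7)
The plan is to describe Eve's attack by running two ``virtual'' executions of the protocol in parallel, one on inputs $(x,y)$ and one on $(x,y')$, coupled so that Alice receives the same sequence of channel symbols $s_1,\dots,s_k$ in both. This is the adaptive analogue of the rolling change supplied by Lemma~\ref{lem:rollingChange}: we cannot simply invoke that lemma because Bob's messages form no fixed pair of strings, but the alternating-substitution strategy underlying it still carries over. Since Alice's view is then \emph{literally identical} in the two virtual executions, she behaves the same way in both (including any decision to terminate), so it suffices to exhibit a choice of $s_1,\dots,s_k$ for which the number of Bob-side corruptions is small in each instance.

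Concretely, Eve forwards Alice's messages to Bob faithfully in both virtual executions; the only freedom each round is the symbol $s_i$ she delivers to Alice. Let $b_i^{(y)}$ and $b_i^{(y')}$ denote Bob's prescribed message in round $i$ in the respective virtual executions given the histories so far. For $i\le 10$ the hypothesis on $(y,y')$ forces $b_i^{(y)}=b_i^{(y')}$, and Eve sets $s_i$ to this common value with no corruption. For $i>10$, Eve again delivers the common value when $b_i^{(y)}=b_i^{(y')}$; otherwise she consults a running counter of disagreement rounds and alternates: odd-indexed disagreements get $s_i=b_i^{(y')}$ (charging one Bob-side corruption to the $(x,y)$ instance), even-indexed ones get $s_i=b_i^{(y)}$ (charging one to the $(x,y')$ instance).

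The accounting is then immediate. Letting $D_j$ denote the number of disagreement rounds in $[11,j]$, the corruption count in either instance through round $j$ is at most $\lceil D_j/2\rceil\le\lceil(j-10)/2\rceil\le(j-9)/2$, matching the claimed bound at $j=k$. For use in the enclosing proof of Theorem~\ref{thm:impAbort} it is also worth recording that the cumulative noise rate over the $2j$ channel uses through round $j$ is at most $(j-9)/(4j)<1/4$, so no honest party can legitimately declare a corruption-rate violation before round $k$.

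The main subtlety I anticipate is handling a virtual Bob who terminates before round $k$, which may happen in one execution but not the other since the two virtual Bobs have different views. The fix is to declare his subsequent ``messages'' to be $\silence$, mirroring what the $\mabort$ model already prescribes for a terminated party; this keeps $b_i^{(y)},b_i^{(y')}$ well-defined and the alternating strategy goes through verbatim. The coupling itself remains valid because Alice's view is unaffected by the internal state of either virtual Bob---only by the symbols $s_i$ Eve delivers---so both virtual Alices continue to behave identically and the per-instance corruption count on the Bob channel is unchanged.
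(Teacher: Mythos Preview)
Your proof is correct and follows the same approach as the paper: couple the two executions so that Alice sees an identical channel sequence, and on each round where Bob's two prescribed symbols disagree, alternate which instance absorbs the corruption. The only difference is packaging---the paper presents this as an induction processing rounds in pairs (invoking the two-symbol base case of Lemma~\ref{lem:rollingChange} on each pair and using that Bob's messages at rounds $k+1,k+2$ are fixed before Eve acts in those rounds), while you give a direct round-by-round construction with a global disagreement counter; both yield the same $(k-9)/2$ bound.
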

\begin{proof}
We prove by induction. The base case where $k\le 10$ is trivial. 

Assume the lemma holds for some even $k$, and we prove for $k+1$ and $k+2$. 
By the induction hypothesis, Eve can cause the run of $\pi(x,y)$ and $\pi(x,y')$ look identical in Alice's eyes while corrupting at most $(k-9)/2$~messages after round~$10$.

Let $t$ denote the next two messages (rounds $k+1,k+2$) 
sent by Bob in the instance of $\pi(x,y)$ and $t'$ in the instance of $\pi(x,y')$.\footnote{Note that $t,t'$ are conditioned on the noise Eve has introduced throughout round~$k$.} 
There are strings $z,z'$ such that
$w(z),w(z')\le 1$ and $t+z=t'+z'$. Assume we construct $z$ via the the construction of Lemma~\ref{lem:rollingChange} then also $z_1=0$ and $z'_2=0$. Also note that $t,t'$ are independent of errors made in rounds $k+1, k+2$ (Bob `sees' that his message at $k+1$ has been changed at round~$k+2$ at the earliest, thus this information can affect only his messages at rounds $>k+2$).

At round $k+1$ the amount of corrupt messages (in both cases) is at most
\[
\left\lfloor \frac{k-9}2\right\rfloor + 1 \underset{k\text{ is even}}{=} \frac{k-10}{2}+1
\le \frac{(k+1) -9}{2}
\]
And the same holds for round $k+2$ (for both cases).
\qed
\end{proof}

With the above lemma, Eve can always cause Alice to be confused between an instance of $\pi(x,y)$ and $\pi(x,y')$ by inducing, at any point of the protocol, a relative noise rate of at most 
\[
\frac{\frac{k-9}2}{2k} < \frac14.
\]
Therefore, unless one of the parties aborts\footnote{As before, Eve needs not corrupt any message after one of the parties aborts, since she is always within her budget.}, Alice outputs a wrong output. In all these cases the protocol is incorrect while the noise rate is at most~$1/4$.
\qed
\end{proof}

 Attempts to extend the above proof to work for the fully utilized $\mterm$~model runs into a hurdle created by 
parties' ability to communicate information about their inputs by the time of aborting.
Indeed, in the above attack Alice learns Bob's inputs (since they were never corrupted), and Bob might be able to distinguish 
$x$ from~$x'$ by whether or not Alice has prematurely aborted 
(i.e., according to the number of silence symbol implicitly communicated by the channel after Alice terminates).

\section{Proof of Theorem~\ref{thm:ProtocolTwoThirds}}
\label{app:freepos}
We now show that every function can be computed by an $\mmain$ protocol that can suffer noise rates less than~$2/3$.
The main technique used in this section is a simple code that takes advantage of the `silence' symbols, which we call \emph{silence encoding} defined in Definition~\ref{def:SE1} for a simple special case, and below for the general case:
\begin{definition}
Let $X=\{x_1, x_2, ..., x_n\}$ be some finite, totally-ordered set.
The \emph{$k$-silence encoding} is a code $\textit{SE}_k: X\to (\Sigma\cup\{\silence\})^{kn}$
that encodes $x_i$ into a string $y_1,...,y_{kn}$
where all $y_j=\silence$ except for the $k$ indices $y_{(i-1)k+1}, \ldots, y_{ik}\in \Sigma^k$.
\end{definition}
Decoding a $k$-silence-encoded codeword is straightforward. The receiver tries to find a message~$x_i$ whose encoding minimizes the Hamming distance to the received codeword. If the string that minimizes the distance is not unique, the decoder marks this event as an \emph{erasure} and outputs~$\bot$. The event where the decoder decodes $x_j\ne x_i$ is called an \emph{error}.
Both encoding and decoding can be done efficiently.

We note the following interesting property of $k$-silence encoding: in order to cause ambiguity  in the decoding (i.e., an erasure), the adversary must change at least $k$ indices in the codeword. 
Moreover, in order to make the decoder output an incorrect value (i.e., an error), the adversary must make 
at least $k+1$ changes to the codeword. Specifically for $k=1$, a single corruption always causes an \emph{erasure} (i.e., ambiguity), while in order to make a decoding \emph{error}, at least 2 transmissions must have been changed.

\smallskip
We are now ready to prove our main theorem of this section. 

\smallskip
\noindent\textbf{Theorem~\ref{thm:ProtocolTwoThirds}.}
\textit{
Let $\X,\Y,\Z$ be some finite sets.
For any function $f : \X\times \Y \to \Z$ there exists an adaptive protocol~$\pi$ for $f$ in the $\mmain$ model that resists noise rates below~$2/3$.}
\begin{proof}
The protocol is composed of two parts, similar to the protocol of Theorem~\ref{thm:protocol-third}: in the first part Alice communicates her input to Bob and in the second part Bob communicates his input to Alice. After the first part, Bob estimates the error injected and proceeds to the second part only if the noise-rate is low enough to correctly complete the protocol, or is high enough so that the adversary will surely exceed its budget by the time the protocol ends (as these two cases are indistinguishable). In addition, Bob's message crucially depends  on the amount of error Eve introduced in the channel.

Assume the channel is defined over some alphabet~$\Sigma$ and denote one of the alphabet's symbols by~`$\sigma$'.
For any $k\in \mathbb{N}$ define $\pi$ on inputs $x_i,y_j \in \X \times \Y$ in the following way:
\begin{enumerate}
\item Alice communicates a $k$-silence encoding of her input, namely,
she waits~$k\cdot (i-1)$ rounds and then sends the symbol~$\sigma$ for $k$ consecutive rounds.
\item Bob waits until round $k|\X|$ and decodes the codeword sent by Alice. Bob adaptively chooses his actions according to the following cases:
\begin{enumerate}
\item if there is ambiguity regarding what $x_i$~is, Bob aborts.
\item otherwise, Bob decodes some~$x_{i'}$.
Let $t$ be the difference between the number of $\sigma$ symbols Bob received during those rounds that ``belong'' to $x_{i'}$ and the number of $\sigma$'s received during the rounds that ``belong'' to a value $x_{i''}$, where $x_{i''}$ is the  2nd best decoding of the received codeword (when decoding by minimizing Hamming distance)

Bob communicates  his input~$y_j$ using the following $2t$-silence encoding: he waits $2k\cdot(j-1)$ rounds and then sends the symbol~$\sigma$ for $2t$~consecutive rounds. \\ Then, Bob outputs $f(x_{i'},y_j)$ and terminates.
\end{enumerate}
\item Alice waits until round $R_{\max} \triangleq k|\X|+2k|\Y|$, and decodes the codeword sent by Bob. If there is ambiguity regarding the value of $y_j$, Alice aborts. Otherwise, she obtains some $y_{j'}$. Alice then outputs $f(x_i, y_{j'})$ and terminates.
\end{enumerate}

Let us analyze what happens at round~$k|\X|$. As mentioned above, in order to cause ambiguity at that round, Eve must change at least $k$~transmissions. In this case Bob aborts at round~$k|\X|$; observe that neither of the  parties communicates any symbol after round $k|\X|$, thus their total communication for this instance is $k$~symbols. This implies noise rate of at least~$1$.

If, on the other hand, at round $k|\X|$ there was no ambiguity, one of two things must have happened:
either Bob correctly decodes Alice's input, or he decodes a wrong input. 
First assume the latter, which implies that at least $k+1$  corruptions were done. 
Since there is no ambiguity, we know that $t>0$ and it must hold that Eve made $e\ge k+t$ corruptions.
Then, by the end of the protocol, the relative noise rate is at least~$\frac{e}{k+2(e-k)}$. This value decreases as $e$ increases, up till the point where $e=2k$ at which it gets a minimal value of $2/3$. 
Eve has no incentive to perform more than $e=2k$ corruptions, this will only increase the relative noise rate without changing the actions of Bob.

Now assume Bob decodes the correct value, thus Eve wins only if Alice decodes a wrong value from Bob or aborts. We consider two cases. 
(i) If Eve has corrupted~$e<k$ symbols by round~$k|\X|$, then 
Bob will send his input via $2t$-silence encoding, where $t\ge k-e$. Thus, in order for Alice to decode a wrong value (or abort), Eve must perform at least additional $2t$ corruptions, yielding a relative noise rate of at least $\frac{e+2t}{k+2t}$. 
Under the constraints that $0 \le e \le k-1$ and $k-e\le t \le k$, it is easy to verify that
$$\frac{e+2t}{k+2t} \ge 1- \frac{t}{k+2t} \ge \frac23.$$
(ii) If Eve has made $e\ge k$ corruptions by round~$k|\X|$, yet Bob decoded the correct value, 
Eve will have to corrupt additional $2t$ symbols to to cause confusion at Alice's side. 
This implies a relative noise rate of at least
\[
\frac{e+2t}{k+2t} \ge \frac{k+2t}{k+2t} =1.
\]
\qed
\end{proof}

\section{Proof of Theorem~\ref{thm:ProtocolHalf}}
\label{app:mainCRprotocol}

In this section we prove Theorem~\ref{thm:ProtocolHalf}, and show a protocol with resilience $1/2-\eps$ and non-vanishing rate in the $\mmain$~Model.
 First, let us recall some primitives and notations that will be used in our proof.
We denote the set $\{1,2,\ldots,n\}$ by~$[n]$, and
for a finite set $\Sigma$ we denote by~$\Sigma^{\le n}$ the set~$\cup_{k=1}^n \Sigma^k$.
The Hamming distance $\dist(x,y)$ of two strings $x,y \in \Sigma^n$
is the number of indices~$i$ for which~$x_i \ne y_i$, and the Hamming weight of some string,
is its distance from the all-zero string, 
$w(x) = \dist(x,0^n)$.
Unless otherwise written, $\log()$~denotes the binary  logarithm (base~2).

A $d$-ary \emph{tree-code}~\cite{schulman96} over alphabet~$\Sigma$ is a rooted $d$-regular tree of arbitrary depth~$N$ whose edges are labeled with elements of~$\Sigma$.
For any string $x\in [d]^{\le N}$, a $d$-ary tree-code~$\T$ implies
\emph{an encoding}
of~$x$,
$\Tenc(x)=w_1w_2..w_{|x|}$ with $w_i\in \Sigma$, defined by concatenating the labels along the path defined by $x$, i.e., the path that begins at the root and whose  $i$-th node is the $x_i$-th child of the $(i\!-\!1)$-st node.

For any two paths (strings) $x,y\in [d]^{\le N}$ of the same length~$n$,
let~$\ell$ be the longest common prefix of both $x$~and~$y$. Denote by $\anc(x,y)= n-|\ell|$ the distance from the $n$-th level to the least common ancestor of paths $x$~and~$y$.
A tree code  has distance $\alpha$ if for any $k\in [N]$ and any distinct
$x,y\in [d]^{k}$, the Hamming distance of $\Tenc(x)$ and $\Tenc(y)$ is at least $\alpha \cdot \anc(x,y)$.

For a string $w \in \Sigma^n$, decoding $w$ using the tree code $\T$ means returning the string $x\in [d]^n$ whose encoding  minimizes the Hamming distance to the received word, namely,
$$\Tdec(w) = \argmin_{x\in [d]^n} \Delta( \Tenc(x), w)\text{.}$$

A theorem by Schulman~\cite{schulman96} proves that for any $d$ and $\alpha<1$ there exists a $d$-ary tree code of unbounded depth and distance $\alpha$ over alphabet of size $d^{O(1/(1-\alpha))}$.
However, no efficient construction of such a tree is yet known.
For a given depth~$N$, Peczarski~\cite{peczarski06} gives a randomized construction for a tree code with $\alpha=1/2$ that succeeds with probability at least~$1-\epsilon$, and requires alphabet of size at least~$d^{O(\sqrt{\log\epsilon^{-1}})}$.
Braverman~\cite{braverman12} gives a sub-exponential (in~$N$) construction of a tree-code, and 
Gelles, Moitra and Sahai~\cite{GMS11,GMS14} provide an efficient construction of a randomized  relaxation of a tree-code of depth~$N$, namely a \emph{potent tree code}, which is powerful enough as a substitute for a tree code in most applications. 
Finally, Moore and Schulman~\cite{MS14} suggested an efficient construction which is based on a conjecture on some exponential sums.

\bigskip
We now prove Theorem~\ref{thm:ProtocolHalf}.
For any function $f$ and any constant $\eps>0$, we construct a protocol that correctly computes $f$ as long as the relative noise rate does not exceeds $1/2-\eps$.

Let $\eps>0$ be fixed, and let $\pi$ be an interactive protocol in the noiseless model for $f$, in which the parties exchange bits with each other for up to $T$ rounds. 
We begin by turning $\pi$ into a resilient version $\pi_{BR}$ which resist noise rate of up to $1/4-\eps$, using techniques from~\cite{BR14}.
The protocol takes $N=O(T)$ rounds in each of which both parties send
 a  message  over some finite alphabet~$\Sigma$
\begin{lemma}[\cite{BR14}]
For every $\eps$ there is an alphabet $\Sigma$ of size $O_\eps(1)$ 
such that any binary protocol $\pi$ can be compiled to a protocol $\pi_{BR}$ of  $N=O_\eps(|\pi|)$ rounds in each of which both parties send a symbol from $\Sigma$. For any input $x,y$, both parties output $\pi(x,y)$ if the fraction of errors is at most $1/4-\eps$.
\end{lemma}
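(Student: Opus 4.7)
The plan is to lift the Braverman–Rao compiler to the $\mmain$ model by wrapping each alphabet symbol in a $1$-silence encoding, and then to invoke an erasure-aware extension of the BR analysis following~\cite{FGOS15}. The point is that after silence-encoding the adversary must spend two corruptions to flip a decoded symbol into a different symbol but only one to turn it into an erasure, so BR's $1/4-\eps$ error tolerance translates into a $1/2-\eps$ relative-corruption tolerance, with only a constant blow-up in round complexity.

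Concretely, apply the preceding BR lemma to the given noiseless binary protocol $\pi$ for $f$ to obtain $\pi_{BR}$ over an alphabet $\Sigma$ of size $O_\eps(1)$ with $N = O_\eps(\comm_f)$ rounds, both parties sending one $\Sigma$-symbol per round. Define the $\mmain$ protocol $\Pi$ by replacing each $\sigma \in \Sigma$ by $\textit{SE}_1(\sigma) \in (\Sigma \cup \{\silence\})^{|\Sigma|}$: a length-$|\Sigma|$ block in which only the position indexed by $\sigma$ is non-silent. Alice and Bob run their blocks in parallel, so both parties send (possibly silence) in each of the $|\Sigma|\cdot N$ sub-rounds. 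The round complexity is $|\Sigma|\cdot N = O_\eps(\comm_f)$ as required, and regardless of noise each party intends exactly $N$ non-silent transmissions, giving $\comm_\Pi = 2N$.

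Each received block is decoded independently: if exactly one position is non-silent, output that position's symbol; otherwise output an erasure mark $\bot$. A short case analysis shows that to force a decoded \emph{error} the adversary must silence the intended position \emph{and} plant a non-silent elsewhere, costing at least $2$ corruptions, whereas a decoded \emph{erasure} costs at least $1$. Hence if the adversary injects $k$ total corruptions and produces $e$ decoded errors and $s$ decoded erasures across the $2N$ blocks, then $k \ge 2e + s$.

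The final ingredient is an erasure-robust version of the inner BR analysis. Following the generalization used in~\cite{FGOS15}, the tree-code argument underlying $\pi_{BR}$ can be adapted so that decoding succeeds whenever $2e + s \le (1 - 4\eps)N$: each erasure is charged with weight $1$ and each error with weight $2$, exactly as in classical minimum-distance decoding. Combining with $k \ge 2e+s$, whenever the relative noise rate $k/(2N) \le 1/2 - 2\eps$ we get $2e + s \le (1-4\eps)N$ and $\Pi$ correctly computes $f$; rescaling $\eps$ yields the claimed $1/2 - \eps$ resilience. The one non-routine step is the erasure-robust BR analysis, and I expect the main obstacle is verifying that BR's tree-code voting and pruning procedures degrade gracefully in the presence of erasure marks, which amounts to redoing BR's potential-function and distance accounting with the $(2,1)$ weighting above rather than counting errors only.
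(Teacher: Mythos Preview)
You are not proving the stated lemma. The lemma in question is a \emph{cited} result from~\cite{BR14} about the non-adaptive (robust) model: it asserts the existence of the compiler $\pi\mapsto\pi_{BR}$ tolerating $1/4-\eps$ errors. The paper does not prove it here; it simply invokes~\cite{BR14} and defers the construction details to the proof of Lemma~\ref{lem:Nhigh}. Your write-up instead \emph{uses} this lemma as a black box and proceeds to prove Theorem~\ref{thm:ProtocolHalf} (the $1/2-\eps$ resilience in $\mmain$). So as a proof of the stated lemma, your proposal is simply off-target: nothing in it establishes the $1/4-\eps$ bound, the tree-code construction, or the alphabet-size claim that constitute the lemma's content.

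If your intent was really Theorem~\ref{thm:ProtocolHalf}, then your approach coincides with the paper's: compose $\pi_{BR}$ with silence encoding, observe the $2e+s\le k$ cost accounting, and invoke the erasure-aware BR analysis from~\cite{FGOS15}. The paper packages that last step as Lemma~\ref{lem:Nhigh}, which says that failure of $\pi_{BR}$ forces the effective noise $\N(1,N)\ge (1-\eps)^2 N$ (with erasures weighted~$1$ and errors weighted~$2$), and then concludes exactly as you do. Your ``exactly one non-silent position'' decoding rule agrees with the paper's minimum-Hamming-distance-with-ties rule for $\textit{SE}_1$, so there is no substantive difference. The obstacle you flag---verifying that the tree-code potential argument survives the $(2,1)$ weighting---is precisely what the paper carries out in detail in the proof of Lemma~\ref{lem:Nhigh}.
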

\noindent The conversion is described in~\cite{BR14}; We give more details about this construction in the proof of Lemma~\ref{lem:Nhigh}.

Next, we construct a protocol $\Pi$ that withstands noise rate of $1/2-\eps$.
The parties run $\pi_{BR}$, yet each symbol from $\Sigma$ is silence-encoded. That is, every round of $\pi_{BR}$ in which a party sends some symbol~$a\in \Sigma$ 
is expanded into $|\Sigma|$~rounds of~$\Pi$ in which a single symbol `$\sigma$' is sent at a timing  that corresponds to the index of $a$ in the total ordering of $\Sigma$. The channel alphabet used in~$\Pi$ is thus unary. Decoding is performed by minimizing  Hamming distance and the decoder obtains either a symbol of $\Sigma$ or an erasure mark~$\erase$.  

From this point and on, we regard only rounds of $\pi_{BR}$ protocol, ignoring the fact that each such `round' is composed of $|\Sigma|$ mini-rounds.
Denote by $\N(i,j)$ the `effective'  noise-rate between rounds $i$ and $j$, for which an erasure is counted as a single error and decoding the wrong symbol of $\Sigma$ is counted as two errors. Formally,
assume that at time~$n$, Alice sends a symbol $a_n\in \Sigma$, 
and Bob receives $\tilde{a}_n\in \Sigma \cup \{\bot\}$, possibly with added noise or an erasure mark
(similarly, Bob sends $b_n \in \Gamma$, and Alice receives~$\tilde{b}_n$).
\begin{definition}\label{def:N}
Let the effective noise in Alice's transmissions be
\[
{\cal N}_A(i,j) = | \{ k\mid  i\le k \le j, \tilde{a}_k=\bot \}| + 2| \{ k \mid i \le k \le j, \tilde{a}_k \notin \{ a_k, \bot\} \}| \text{,}
\]
and similarly define ${\cal N}_B(i,j)$ for the effective noise in Bob's transmissions. The \emph{effective number of corruptions} in the interval $[i,j]$ is ${\cal N}(i,j) = {\cal N}_A(i,j) + {\cal N}_B(i,j) $.
\end{definition}
The following lemma states that if the $\pi_{BR}$ fails, then~$\N$ must be high.

\begin{lemma}[\cite{FGOS15}]\label{lem:Nhigh}
Let $\eps>0$ be fixed and let $|\pi_{BR}|=N$. %
If $\pi_{BR}$ fails, then 
\[
\N(1,N) \ge (1-\eps)^2 N.
\]
\end{lemma}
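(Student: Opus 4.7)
The plan is to adapt the Braverman-Rao tree-code analysis to count the weighted quantity $\N$, along the lines of the errors-plus-erasures analysis of \cite{FGOS15}.

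First I would justify why it suffices to argue inside $\pi_{BR}$. Inside any length-$|\Sigma|$ block of $\Pi$, one adversarial corruption can at most destroy the intended non-silent slot and produce an erasure $\bot$ at the outer decoder; to produce a \emph{specific} wrong symbol in $\Sigma$, the adversary must both silence the intended slot and inject a non-silent symbol in a different slot, for a cost of two corruptions. Thus each erasure in $\pi_{BR}$ costs the adversary at least one corruption in $\Pi$ and each error costs at least two, which matches the weighting in the definition of $\N$. From this point on, I treat $\pi_{BR}$ as running over a symbol-level channel whose outputs are either the transmitted symbol, an erasure $\bot$, or a wrong element of $\Sigma$.

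Next, I would instantiate the tree code inside $\pi_{BR}$ with distance $\alpha = 1-\eps$, and adapt the tree-code distance argument to the errors-plus-erasures setting. Consider any candidate ``wrong'' tree path diverging from the correct one over $\ell$ consecutive positions; by the tree-code distance, the two codewords differ in at least $\alpha\ell$ of those positions. A short case analysis on these distinguishing positions shows that in order for the decoder to prefer the wrong path, the adversary's contribution inside the window must satisfy $2e + f \ge \alpha \ell$, where $e$ and $f$ are respectively the numbers of errors and erasures in that window: an erasure contributes equally to the Hamming distance to both candidates and is therefore neutral for the decision, whereas only errors that actually match the wrong codeword meaningfully swing the decoder, and each such error costs two corruptions. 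Consequently the effective noise $\N$ restricted to any divergent window of length $\ell$ is at least $\alpha \ell$.

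Finally, I would complete the global argument using the Braverman-Rao potential function: failure of $\pi_{BR}$ entails that the total depth over which a party's decoded tree path diverges from the correct one is at least $(1-\eps)N$ (this is exactly the standard $\eps$-slack in the BR construction, which allows both parties to reach a correct terminal state), and summing the local bound above over the divergent portion yields $\N(1,N) \ge \alpha(1-\eps) N = (1-\eps)^2 N$, as claimed. The main obstacle is the local inequality $2e + f \ge \alpha\ell$ inside the divergent window, which must hold uniformly over all adversarial choices of how to split corruptions among erasures, matching errors, and non-matching errors; fortunately this is precisely the accounting carried out in \cite{FGOS15} for a structurally identical channel, and it transfers here essentially unchanged because the tree-code skeleton of $\pi_{BR}$ is the same.
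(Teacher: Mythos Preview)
Your local argument is correct and matches the paper: on any window of length~$\ell$ where the tree-code decoder outputs a path diverging from the true one, the tree-code distance $\alpha = 1-\eps$ forces $2e + f \ge (1-\eps)\ell$, hence $\N \ge (1-\eps)\ell$ on that window. This is exactly the paper's Lemma~\ref{lem:br1}.

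The gap is in your global step. You assert that failure of $\pi_{BR}$ implies the ``total divergent depth'' is at least $(1-\eps)N$ and then sum the local bound over that set. But in the Braverman--Rao protocol as instantiated here, edges of the $\pi$-tree are not sent one per round: they are announced via variable-length back-pointers $(e,s)$ of length $O(\log N)$, encoded symbol by symbol through the tree code. Failure means $m(N) < t(T)$, i.e., the $T$-th edge was never both announced and correctly decoded; turning this into a lower bound on $\N(1,N)$ is not a one-line potential argument. The paper (following~\cite{FGOS15}) proves two intermediate structural lemmas: one (Lemma~\ref{lem:br2}) relating the announcement time $t(k)$ to decoding failures in a preceding window of size $c\log(i-t(k-1))$, and an inductive one (Lemma~\ref{lem:br3}) that accumulates these into
\[
\N(1,N) \;\ge\; (1-\eps)\Bigl(N - T - \textstyle\sum_{s=1}^{T} c\log(\ell_s+2)\Bigr)
\]
for nonnegative $\ell_s$ with $\sum_s \ell_s \le N$. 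The second factor of $(1-\eps)$ then comes not from a ``divergent-depth'' count but from choosing $N = \Theta\bigl(T\eps^{-1}\log\eps^{-1}\bigr)$ and applying concavity of $\log$ so that $T + cT\log(N/T+2) \le \eps N$. Your sketch hides precisely this part; without it, ``sum the local bound over the divergent portion'' is not well-defined, since the divergent windows at different rounds overlap and the $\eps$-slack must absorb the logarithmic overhead of the pointer encoding rather than arising from any disjoint covering.

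(A minor point: your first paragraph about silence encoding is not part of this lemma at all---it belongs to the proof of Theorem~\ref{thm:ProtocolHalf}, which invokes the lemma. Lemma~\ref{lem:Nhigh} is stated purely for $\pi_{BR}$ over an abstract error-plus-erasure channel.)
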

With this lemma, the proof of the theorem is immediate:
recall that with silence encoding, causing an erasure costs at least one corruption and causing an error costs at least two corruptions. Observe that 
$\commmain_{\Pi}=\commmain_{\pi_{BR}}=2N$, then 
if the amount of corruptions is limited to $1/2-\eps$,
\[
\max\  {\cal N}(1,N) = (1-2\eps)N \;\; < (1-\eps)^2N
\]
where the maximum is over all possible noise-patterns of at most $(1/2-\eps)\cdot 2N$ corruptions.

Finally, we give the proof for Lemma~\ref{lem:Nhigh}. Parts of this analysis were taken as-is from~\cite{FGOS15} and we re-iterate them here (with the authors' kind permission) for self containment.

\begin{proof}(Lemma~\ref{lem:Nhigh}.)
Let us recall how to construct a constant (non-vanishing) rate protocol~$\pi_{BR}$ for computing $f(x,y)$ over a noisy channel out of an interactive protocol $\pi$ for the same task that assumes a noiseless channel~\cite{BR14}. We assume that~$\pi$ consists of $T$~rounds in which Alice and Bob send a single bit according to their input and previous transmissions. Without loss on generality, we assume that Alice sends her bits at odd rounds while Bob transmits at even rounds. We can view the computation of~$\pi$ as a root-leaf walk along a binary tree in which odd levels correspond to Alice's messages and even levels to Bob's, see Figure~\ref{fig:pitree}.
\begin{figure}[htb]
\center
\begin{framed}
\begin{tikzpicture}[scale=1,
	level/.style={sibling distance=30mm/#1},
	tik/.style={line width=2pt,-latex},
	dash/.style={dashed,-latex},
	reg/.style={thin,solid}]
\node [circle,draw] (z){root}
	child[thin] {
			child{
				child{
				edge from parent
				node [left] {0}
				}
				child{
				edge from parent [dash]
				node [right] {1}
				}			
			edge from parent [reg]
			node [left] {0}
			}
			child{
				child{
				edge from parent [reg]
				node [left] {0}
				}
				child{
				edge from parent [tik]
				node [right] {1}
				}			
			edge from parent [tik]
			node [right] {1}
			}
	edge from parent [tik]
	node [left] {0}
	}
	child {
			child{
				child{
				edge from parent [reg]
				node [left] {0}
				}
				child{
				edge from parent [densely dotted,->]
				node [right] {1}
				}			
			edge from parent [dash]
			node [left] {0}
			}
			child{
				child{
				edge from parent [densely dotted,->]
				node [left] {0}
				}
				child{
				edge from parent
				node [right] {1}
				}			
			edge from parent
			node [right] {1}
			}
	edge from parent
	node [right] {1}
	}
	;
\node at (-5,-0.7) {Alice};
\node at (-5,-2) {Bob};
\node at (-5,-3.5) {Alice};
\node at (5,0) {\phantom{Alice}};  %
\end{tikzpicture}
\end{framed}
\caption{\footnotesize A $\pi$-tree showing the path $P$ (bold edges) taken by Alice and Bob for computing $f(x,y)$. Dashed edges represent the hypothetical reply of Alice and Bob given that
a different path $P'$ was taken (when such replies are defined).}
\label{fig:pitree}
\end{figure}

In order to obtain a protocol that withstands (a low rate of) channel noise, Alice and Bob \emph{simulate} the construction of path~$P$ along the $\pi$-tree. The users transmit edges of~$P$ one by one, where each user transmits the next edge that extends the partial path transmitted so far.
  This process is repeated for $N=O_\eps(T)$ times. In~\cite{BR14} it is shown that unless the noise rate exceeds~$1/4$, after $N$~rounds both parties will decode the entire path~$P$. We refer the reader to~\cite{BR14} for a full description of the protocol and correctness proof. We now extend the analysis for the case of channels with errors and erasures.

To simplify the explanation, assume that the players wish to exchange, at each round, a transmission over $\Gamma'=\{0,\ldots,N\} \times \{0,1\}^{\le2}$. Intuitively, the transmission $(e,s)\in\Gamma'$ means ``extend the path~$P$ by taking at most two steps defined by~$s$ starting at the child of the edge I have transmitted at transmission number~$e$''.

Since $\Gamma$ is not of constant size, the  symbol $(e,s)$ is not communicated directly over the channel, but is encoded in the following manner.
Let $\Gamma = \{ <, 0, 1, >, \}$ and encode each $(e,s)$  into a string $< z>\,\in \Gamma^{\le \log N+2}$ where $z$ is the binary representation of $(e,s)$. 
Furthermore, assume that $|\!<z>\!| \le c\log(e)$ for some constant $c$ we can pick later.
Next, each symbol of $\Gamma$ is encoded via a
$|\Gamma|$-ary tree-code 
with distance parameter $1-\eps$ and label alphabet~$\Sigma=O_\eps(|\Gamma|)$.\footnote{On top of the tree-code encoding, we implicitly perform silence encoding of every symbol in $\Sigma$.}
At time~$n$ Alice sends $a_n\in \Sigma$, the last symbol of $\Tenc((e,s)_1,\ldots,(e,s)_n)=a_1a_2\cdots a_n$,
and Bob receives $\tilde{a}_n\in \Gamma \cup \{\bot\}$, possibly with added noise or an erasure mark
(similarly, Bob sends $b_n \in \Sigma$, and Alice receives~$\tilde{b}_n$).
Let $\Tdec(\tilde{a}_1,\ldots, \tilde{a}_n)$ denote the string Bob decodes at time~$n$ (similarly, Alice  decodes $\Tdec(\tilde{b}_1,\ldots, \tilde{b}_n)$).
For every $i>0$, we denote with~$m(i)$ the largest number such that the first $m(i)$ symbols of $\Tdec(\tilde{a}_1, \ldots, \tilde{a}_i)$  equal to $a_1, \ldots, a_{m(i)}$ and the first $m(i)$ symbols of $\Tdec(\tilde{b}_1, \ldots, \tilde{b}_i)$  equal to $b_1, \ldots, b_{m(i)}$.

Let $\N$ be as defined in Definition~\ref{def:N}. 
We begin by showing that if $m(i) < i$ then many corruptions must have happened in the interval $[m(i)+1, i]$.
\begin{lemma}\label{lem:br1}
$\N (m(i)+1,i) \ge (1-\eps)(i-m(i))$.
\end{lemma}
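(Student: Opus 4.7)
My plan is to adapt the standard tree-code decoding argument to a channel with both errors and erasures, and to crucially use that silence encoding forces Eve to pay $2$ corruptions per error and only $1$ per erasure. Since $m(i) < i$, at least one party has decoded incorrectly on $[1,i]$; assume without loss of generality that it is Bob. Then $\hat{a} = \Tdec(\tilde{a}_1, \ldots, \tilde{a}_i)$ shares a common prefix with $a_1, \ldots, a_i$ of length exactly $m(i)$, so the tree-code distance property guarantees that the set $T = \{k \in [m(i)+1, i] : \Tenc(\hat{a})_k \ne \Tenc(a)_k\}$ satisfies $|T| \ge (1-\eps)(i - m(i))$, with $T \subseteq [m(i)+1,i]$.

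Next, I invoke the argmin property of the decoder: $\Delta(\Tenc(\hat{a}), \tilde{a}) \le \Delta(\Tenc(a), \tilde{a})$. Since $\Tenc(\hat{a})$ and $\Tenc(a)$ agree on every position outside $T$, this inequality restricts to $T$. I partition $T$ into four disjoint subsets according to what Bob actually receives at each position $k$: $T_{\text{cor}}$ if $\tilde{a}_k = a_k$, $T_{\text{ers}}$ if $\tilde{a}_k = \bot$, $T_{\text{eh}}$ if $\tilde{a}_k$ is an error equal to $\Tenc(\hat{a})_k$, and $T_{\text{eo}}$ for any other error. A direct position-by-position check shows that $\Delta(\Tenc(\hat{a}), \tilde{a}) - \Delta(\Tenc(a), \tilde{a})$ restricted to $T$ equals $|T_{\text{cor}}| - |T_{\text{eh}}|$, and the argmin inequality therefore forces $|T_{\text{cor}}| \le |T_{\text{eh}}|$.

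To conclude, I account for the silence-encoding cost using Definition~\ref{def:N}: $\N_A(m(i)+1, i) = E_A + 2F_A$, where $E_A, F_A$ are the numbers of erasures and errors Bob sees in $[m(i)+1, i]$. Since $E_A \ge |T_{\text{ers}}|$ and $F_A \ge |T_{\text{eh}}| + |T_{\text{eo}}|$, combining with $|T_{\text{cor}}| \le |T_{\text{eh}}|$ yields
\[
\N_A(m(i)+1, i) \ge 2|T_{\text{eh}}| + 2|T_{\text{eo}}| + |T_{\text{ers}}| \ge |T| \ge (1-\eps)(i-m(i)).
\]
Since $\N(m(i)+1, i) \ge \N_A(m(i)+1, i)$, the claim follows; the symmetric case in which Alice's decoding of Bob's messages is at fault is handled identically.

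The main obstacle I anticipate is correctly executing this four-way case analysis and verifying that the factor-of-$2$ gain from silence encoding propagates through the argument. A naive triangle inequality applied to $\Tenc(\hat{a}), \Tenc(a), \tilde{a}$ would give only $\N_A \ge \tfrac{1}{2}(1-\eps)(i-m(i))$, corresponding to the $1/4-\eps$ resilience of~\cite{BR14}; the refined bookkeeping on $T$ is precisely what yields the factor-of-$2$ improvement needed for Theorem~\ref{thm:ProtocolHalf}.
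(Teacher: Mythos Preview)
Your proof is correct and follows the same approach as the paper: identify the party (WLOG Bob) whose decoding diverges at position $m(i)+1$, invoke the tree-code distance to get $|T|\ge(1-\eps)(i-m(i))$, and then argue that the effective noise $\N_A$ on $[m(i)+1,i]$ must be at least $|T|$. The paper's own proof simply asserts this last step as ``immediate''; your four-way partition of $T$ into $T_{\text{cor}},T_{\text{ers}},T_{\text{eh}},T_{\text{eo}}$ and the inequality $|T_{\text{cor}}|\le|T_{\text{eh}}|$ coming from the argmin property is exactly the bookkeeping that justifies it, so your write-up is more complete than the paper's on this point.

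One minor remark on framing: Lemma~\ref{lem:br1} is a statement purely about the effective noise $\N$ of Definition~\ref{def:N} (erasures counted once, errors twice) and does not itself invoke silence encoding. The connection to silence encoding---that a real corruption budget of $(1/2-\eps)\cdot 2N$ bounds $\N(1,N)$ by $(1-2\eps)N$---is made only afterward, in the last step of the proof of Theorem~\ref{thm:ProtocolHalf}. So your opening sentence slightly conflates the two layers, but the mathematics is unaffected.
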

\begin{proof}
Assume that at time $i$ Bob decodes the string $a'_1, \ldots, a'_i$.
By the definition of $m(i)$,  $a'_1,\ldots, a'_{m(i)} = a_1,\ldots, a_{m(i)}$, and assume without loss of generality that $a'_{m(i)+1} \ne a_{m(i)+1}$.  Note that the Hamming distance
between $\Tenc(a_1, \ldots, a_i)$ and $\Tenc(a'_1, \ldots, a'_i)$ must be at least ${(1-\eps)}{(i-m(i))}$. It is immediate that for Bob to make such a decoding error, $\N_A \ge {(1-\eps)}{(i-m(i))}$.
\qed
\end{proof}
Next, we demonstrate that if some party didn't announce the $k$-th edge by round $i+1$, it must be that the $(k-1)$-th edge wasn't correctly decoded early enough to allow  completing the transmission of the $k$-th edge.
\begin{lemma}\label{lem:br2}
Let $t(i)$ be  the earliest time such that both users announced the first $i$~edges of~$P$ within their transmissions.
For $i\ge 0$, $k \ge 1$, if $t(k) > i+1$, then either
$t(k-1) > i-c\log(i-(t(k-1))$, or
there exists $j$ such that $t(k-1)>m(j)$ and $i-c\log(i-t(k-1)) < j \le i$.
\end{lemma}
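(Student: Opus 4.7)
I would prove the contrapositive. Assuming $t(k) > i+1$ and that both stated disjuncts fail, I would derive a contradiction by showing $t(k) \le i+1$. Set $L := c\log(i - t(k-1))$: this is the worst-case length, in tree-code symbols, of announcing one edge under the construction. The failure of the first disjunct gives $t(k-1) \le i - L$, so the $(k-1)$-th edge is announced at least $L$ rounds before time $i$. The failure of the second disjunct gives $m(j) \ge t(k-1)$ for every $j$ with $i - L < j \le i$; in particular, at $j_0 := i - L + 1$ both parties have correctly decoded every tree-code symbol up through the announcement of the $(k-1)$-th edge, so each of them knows the first $k-1$ edges of $P$.

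With both parties aware of the first $k-1$ edges of $P$ by round $j_0$, each can locally compute the next edge of $P$ and initiate the single-symbol message $(e,s)\in \Gamma'$ that extends the announced path past the $(k-1)$-th edge. The back-reference $e$ points to a transmission no earlier than time $t(k-1)$, so the delta-encoded pre-image $|\!<z>\!|$ has length at most $c\log(i - t(k-1)) = L$. Since the $L$ rounds $j_0, j_0+1, \ldots, i$ are exactly enough to emit these symbols, both parties have announced the $k$-th edge within their transmissions by round $i$, yielding $t(k) \le i+1$ and contradicting the hypothesis.

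The main obstacle I anticipate is pinning down the length bound $|\!<z>\!| \le c\log(i - t(k-1))$ under the correct interpretation of the back-reference $e$ in $(e,s)$. The bound matches only if one encodes $e$ as a delta (``$e$ rounds ago''), not as an absolute transmission index, so I would need to revisit the construction of $\pi_{BR}$ and verify that this is the convention and that the constant $c$ can absorb the delimiter overhead inherited from the alphabet $\Gamma = \{<,0,1,>\}$. Small-$k$ edge cases (in particular $k=1$, where $t(0)$ should be read as $0$ and the back-reference collapses to an initial marker) and the degenerate case $i - t(k-1) \le 1$ would need a brief separate check, but both follow the same ``enough clean rounds'' logic once the main length bound is in place.
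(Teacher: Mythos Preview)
Your proposal is correct and follows essentially the same contrapositive argument as the paper: assume both disjuncts fail, deduce that the last $c\log(i-t(k-1))$ rounds are ``clean'' (i.e., $m(j)\ge t(k-1)$ there) and that $t(k-1)$ falls early enough, then use the length bound on the encoded edge announcement to conclude the $k$-th edge is fully transmitted by round $i+1$, contradicting $t(k)>i+1$. One small imprecision: the $k$-th edge of $P$ is owned by only one party (depending on parity), so only that party can compute and announce it---the paper handles this with a ``without loss of generality, the $k$-th edge is Alice's'' reduction rather than having both parties announce it.
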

\begin{proof}{}
[The proof is taken from~\cite{BR14}, as this claim is independent of the definition of $\N$.]
Without loss of generality, assume that the $k$-th edge of~$P$ describes Alice's move.
Suppose that for any $j$ that satisfies 
$i-c\log(i-t(k-1))<j \le i$ both $t(k-1) \le m(j)$ and $t(k-1) \le i-c\log(i-(t(k-1))$.
Then it must be the case that the first $k-1$ edges of~$P$
have already been announced, and correctly decoded by Alice for any $j$ in the last $c\log(i-t(k-1))$ rounds, yet the $k$th~edge has not. However, by the protocol definition, Alice should announce this edge, and this takes her at most~$c\log(i-t(k-1))$ rounds, thus by round $i+1$ she has completed announcing it, in contradiction to our assumption that $t(k) > i+1$.
\qed
\end{proof}

\goodbreak
Finally, we relate the effective noise rate with the progress of the protocol.
\begin{lemma}\label{lem:br3}
For $i\ge -1$, $k\ge 0$, if $t(k) > i+1$, then there exist numbers $\ell_1, \ldots, \ell_k \ge 0$ such that $\sum_{s=1}^k \ell_s \le i+1$ 
and
$\N(1,i)\ge (1-\eps)(i-k+1-\sum_{s=1}^kc\log(\ell_s+2))$.
\end{lemma}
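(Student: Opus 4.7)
The plan is to prove Lemma~\ref{lem:br3} by induction on $k$, with the round index $i$ ranging freely. The base case $k=0$ is essentially vacuous: with the convention $t(0)=0$, the hypothesis $t(0) > i+1$ forces $i = -1$, and the claim reduces to $\N(1,-1) = 0 \ge 0$ with an empty sum of $\ell_s$'s. For the inductive step, assume the lemma for $k-1$, take $i$ with $t(k) > i+1$, and apply Lemma~\ref{lem:br2} to split into its two cases. In both cases the strategy is to invoke the induction hypothesis at a strictly earlier index to handle the first $k-1$ edges, then use either the \emph{idle time} (Case (a)) or Lemma~\ref{lem:br1} (Case (b)) to account for the $k$-th edge, choosing $\ell_k$ so that the extra slack is absorbed by the logarithmic overhead $c\log(\ell_k+2)$.

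In Case (a), we have $t(k-1) > i - c\log(i - t(k-1))$. Set $i' = t(k-1) - 2$, so that $t(k-1) > i'+1$, and apply the induction hypothesis to $(k-1, i')$. This yields $\ell_1,\ldots,\ell_{k-1} \ge 0$ with $\sum_{s=1}^{k-1}\ell_s \le i'+1$ and $\N(1,i') \ge (1-\eps)\bigl(i'-k+2 - \sum_{s=1}^{k-1} c\log(\ell_s+2)\bigr)$. Define $\ell_k = i - i'$, i.e.\ the length of the gap $[i'+1,i]$ where we contribute no new corruptions. Monotonicity of $\N$ gives $\N(1,i) \ge \N(1,i')$, so it suffices to verify that the inductive bound plus the definition of $\ell_k$ yields the target inequality, which reduces to $c\log(\ell_k+2) \ge \ell_k - 1$; this holds for all sufficiently small $\ell_k$ by the hypothesis $i - t(k-1) < c\log(i-t(k-1))$ and a single-variable check with the constant $c$ chosen large enough in the construction of $\pi_{BR}$. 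The sum constraint $\sum_{s=1}^k \ell_s \le i+1$ is immediate.

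In Case (b), Lemma~\ref{lem:br2} gives $j$ with $i - c\log(i-t(k-1)) < j \le i$ and $t(k-1) > m(j)$. Since $t(k-1) > m(j) \ge (m(j)-1)+1$, I apply the induction hypothesis to $(k-1, m(j)-1)$, obtaining $\N(1,m(j)-1) \ge (1-\eps)\bigl(m(j) - k + 1 - \sum_{s=1}^{k-1} c\log(\ell_s+2)\bigr)$ with $\sum_{s=1}^{k-1} \ell_s \le m(j)$. Lemma~\ref{lem:br1} then yields $\N(m(j)+1, j) \ge (1-\eps)(j - m(j))$ on the disjoint interval $[m(j)+1,j]$, so adding the two bounds gives $\N(1,j) \ge (1-\eps)\bigl(j - k + 1 - \sum_{s=1}^{k-1} c\log(\ell_s+2)\bigr)$. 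Setting $\ell_k = i - j$ closes the remaining gap $[j+1, i]$: one uses $i - j < c\log(i - t(k-1))$ together with the crude bound $i - t(k-1) \le \ell_k + \sum_{s=1}^{k-1}\ell_s$, so that after choosing the constant $c$ generously, $\ell_k \le c\log(\ell_k+2)$ and the bound folds into the claimed form. Again $\sum_{s=1}^k\ell_s \le j + \ell_k = i \le i+1$.

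The main obstacle is purely bookkeeping in Case (b): the quantity appearing in Lemma~\ref{lem:br2} is $c\log(i-t(k-1))$, whereas the target bound carries $c\log(\ell_k+2)$, and $t(k-1)$ may be much earlier than $j$. Re-expressing the first in terms of the second requires carefully separating the contribution of the already-transmitted edges (captured by $\sum_{s<k}\ell_s$) from the new gap $\ell_k$, and using a single tuning of the constant $c$ that simultaneously makes $c\log(\ell_k+2) \ge \ell_k - 1$ in Case (a) and $c\log(\ell_k+2) \ge \ell_k$ in Case (b). Once the constant is fixed (as part of the original Braverman--Rao construction underlying $\pi_{BR}$), both inequalities hold and the induction closes.
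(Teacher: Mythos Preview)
Your inductive structure matches the paper's: induct on $k$, apply Lemma~\ref{lem:br2}, and in each branch recurse to $k-1$ at an earlier index, using Lemma~\ref{lem:br1} in Case~(b). The gap is in your choice of $\ell_k$ in Case~(b). You set $\ell_k = i - j$ and then need $\ell_k \le c\log(\ell_k+2)$ to close the bound. The only available inequality from Lemma~\ref{lem:br2} is $i - j < c\log\bigl(i - t(k-1)\bigr)$, and your ``crude bound'' $i - t(k-1) \le \ell_k + \sum_{s<k}\ell_s$ is false in general: the induction only gives $\sum_{s<k}\ell_s \le m(j) < t(k-1)$, so $\ell_k + \sum_{s<k}\ell_s < (i-j) + t(k-1)$, which need not dominate $i - t(k-1)$. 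Even granting it, $\ell_k < c\log\bigl(\ell_k + \sum_{s<k}\ell_s\bigr)$ does not yield $\ell_k \le c\log(\ell_k+2)$ when the sum is large. Concretely, with $c=2$, $i=1000$, $t(k-1)=1$, $m(j)=0$, $j=990$ (all consistent with the conditions of Lemma~\ref{lem:br2}), you get $\ell_k = 10$ while $c\log(\ell_k+2) \approx 7.2$, so the required inequality fails. Tuning $c$ cannot rescue this: for any fixed $c$, the gap $i-j$ can be as large as $\Theta(c\log i)$, whereas $c\log(\ell_k+2)$ is only $\Theta(c\log\log i)$.

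The paper instead sets $\ell_k = i - t(k-1)$ in Case~(b). Then the Lemma~\ref{lem:br2} bound reads $i - j < c\log(\ell_k) \le c\log(\ell_k+2)$, and since the deficit in the noise bound after combining the inductive estimate with Lemma~\ref{lem:br1} is exactly $i-j$, the inequality closes immediately; the sum constraint survives because $\sum_{s<k}\ell_s \le m(j) < t(k-1)$ gives $\sum_{s\le k}\ell_s < t(k-1) + \bigl(i - t(k-1)\bigr) = i$. Your Case~(a) has an analogous but milder off-by-one: you need $\ell_k-1 \le c\log(\ell_k+2)$ but only derive $\ell_k-2 < c\log(\ell_k+2)$; taking $\ell_k = i - t(k-1)$ there as well (as the paper does) makes the arithmetic close exactly.
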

\begin{proof}
We prove by induction. The claim trivially holds for $k=1$ and for~$i\le0$  by choosing $\ell_s=0$. 
Otherwise, by Lemma~\ref{lem:br2} there are two cases. The first case is that $t(k-1) > i-c\log(i-t(k-1))$. Let $i'=t(k-1)-1$ and $k'=k-1$, thus by the induction hypothesis (on $i',k'$), there exist $\ell_1, \ldots, \ell_{k-1} \ge 0$ with $\sum_{s=1}^{k-1}c\log( \ell_s) \le t(k+1)$ such that
\begin{align*}
\N(1,i) \ge \N(1,'i)& \ge (1-\eps)\left( \left(t(k-1)-1\right)-(k-1)+1-\sum_{s=1}^{k-1} c\log(\ell_s +2)\right)  \\
& = (1-\eps)\left( i-k+1-\sum_{s=1}^{k-1} c\log(\ell_s +2) - (i-t(k-1))\right)
\end{align*}
Set $\ell_k= i-t(k+1)$ to complete this case.

In the other case there exists $j$ such that $m(j)< t(k-1)$ and $i-c\log(i-t(k-1)) < j \le i$. In this case we can write
\[
\N(1,i) = \N(1,m(j)) + \N(m(j)+1,i).
\] 
The second term is lower bounded by $\N(m(j)+1,j)$, which by Lemma~\ref{lem:br1} is lower bounded by $(1-\eps)(j-m(j))$. 
We use the induction hypothesis to bound the first term  (with $i'=m(j)-1$ and  $k'=k-1$) to get
\begin{align*}
\N(1,m(j)) \ge \N(1,m(j)-1) &\ge (1-\eps)\left( m(j)-1 -(k-1)+1- \sum_{s=1}^{k-1}c\log(\ell_s+2) \right)
 \\
& =  (1-\eps)\left(j -k+1- \sum_{s=1}^{k-1}c\log(\ell_s+2) -j +m(j) \right)
\end{align*}
for $\ell_1, \ldots, \ell_{k-1} \ge 0$ such that $\sum_{s=1}^{k-1}\ell_s < m(j)$.
Take $\ell_k=i-t(k-1)$. Since $t(k-1)\ge m(j)$ we get that $\sum_{s=1}^{k}\ell_s < m_j + (i-m(j)) < i +1$ and 
\begin{align*}
\N(1,i) &\ge \N(1,m(j)) + \N(m(j)+1,i)  \\
& \ge (1-\eps) \left( j-k+1 - \sum_{s=1}^{k-1}c\log(\ell_s+2)  \right) \\
& \ge  (1-\eps) \left( i-k+1 - \sum_{s=1}^{k-1}c\log(\ell_s+2) -i+j \right)
\end{align*}
Which completes the proof since for this case
$i-j < c\log(i-t(k-1))=c\log(\ell_k)$.
\qed
\end{proof}
We can now complete the proof of Lemma~\ref{lem:Nhigh}. 
Suppose the protocol $\pi_{BR}$ has failed, thus $m(N) <t(T)$. 
By Lemma~\ref{lem:br3} we have $\ell_1,\ldots,\ell_T \ge 0$ that satisfy $\sum_{s=1}^T \ell_s < m(N) < N$ and
\begin{align*}
\N(1,N) &\ge \N(1,m(N)-1) + \N(m(N)+1,N) \\
& \ge (1-\eps) (m(N)-T-\sum_{s=1}^T c\log(\ell_s+2)) + (1-\eps)(N-m(N)) \\
& \ge (1-\eps)\left(N - T - cT\log\left(\frac1T\sum_{s=1}^T (\ell_s+2)\right)\right) \\
&\ge (1-\eps)\left(N - T - cT\log\left(\frac{m(N)}T+2\right)\right),
\end{align*}
where the second transition is due to the concavity of the $\log$ function.
Setting, for instance, $N=T\frac{c^2}{\eps}\log(\eps^{-1})$ gives 
\begin{align*}
\N(1,N) & \ge (1-\eps)\left(N - \eps N / c^2\log(\eps^{-1}) - \eps N \log(3N/T)/c\log(\eps^{-1})\right) \\
&=   (1-\eps)\left (N - \eps N / c^2\log(\eps^{-1})  - \eps\log(3c^2\eps^{-1}\log(\eps^{-1}))/c\log(\eps^{-1}) \right) \\
&=  (1-\eps)\left (1-  \eps  \frac{1+\log(\eps^{-1}) + c\log(3c^2\log(\eps^{-1}))}{c^2\log(\eps^{-1})}
\right) N \\
&> (1-\eps)^2N ,
\end{align*}
for a large enough constant~$c$.
\qed
\end{proof} %

\section{Proof of Theorem~\ref{thm:main-shared}}
\label{app:protocolSharedRand}

In this section we provide the detailed proof for Theorem~\ref{thm:main-shared}. For convenience, we re-state the theorem below.

\smallskip\noindent
\textbf{Theorem~\ref{thm:main-shared}.} 
\textit{%
For any  small enough constants $\eps >0$ and for any function~$f$, 
there exists an interactive protocol in the $\mmain$~model 
with round complexity $O(\comm_f)$ such that,
if the adversarial relative corruption rate is at most~$1-\eps$,
the protocol correctly computes~$f$ with overwhelming success probability over the choice of the shared random string.
}\medskip

We begin with a short motivation for our construction.
Our starting point is the protocol of Theorem~\ref{thm:ProtocolHalf}, i.e., concatenating~\cite{BR14} with silence encoding. We need to deal with two issues: deletion of labels (erasures) and altering labels (errors). 
First we take care of the errors, which is done by the technique of the so called \emph{Blueberry code}~\cite{FGOS15}.
A \bc with parameter $q$ encodes each symbol in~$\Sigma$ into a random symbol in $\Gamma$, where $|\Sigma|/|\Gamma| < q$. Since the mapping $\Sigma\to\Gamma$ is unknown to the adversary, any change to the coded label will be detected with probability $1-q$ and the transmission will be considered as an erasure. By choosing~$q$ to be small enough (as a function of~$\eps$), we can guarantee that the adversary cannot do to much harm by changing symbols. 

Next, we need to deal with the more problematic issue of erasures. The problem is that the \cite{BR14} protocol is \emph{symmetric}, that is, Alice and Bob speak the same amount of symbols. Thus, a successful corruption strategy with relative noise rate~$1/2$ just deletes all Alice's symbols. To overcome this issue we need to ``break'' the symmetry. We will do that by sending indication of deleted labels: If Alice's label was deleted, Bob will tell her so, and she will send more copies of the deleted label. If all of these repeated transmissions are deleted again, Bob will indicate so and Alice will send again more and more copies of that label. This continues until the total amount of re-transmissions surpasses the amount of transmissions in the noiseless scenario.
This breaks the symmetry: if Eve wishes to delete all the copies she will end up causing Alice to speak more, which forces Eve to delete the additional communication as well, which in turn forces her into increasing the average relative noise rate she introduces.  

The remaining issue is to prevent Eve from causing the parties to communicate many symbols
without her making many corruptions, e.g., by forging Bob's feedback to make Alice send unnecessary copies of her labels. This is prevented by the \bc: such an attack succeeds with very small probability that makes it unaffordable.

\begin{proof}\textbf{(Theorem~\ref{thm:main-shared}.)}\ 
The protocol is based on the protocol of Theorem~\ref{thm:ProtocolHalf} (i.e., on the scheme of~\cite{BR14}), yet replacing each label transmission with an adaptive subprotocol that allows retransmissions of deleted symbols. Furthermore, each transmission is encoded via a \bc~\cite{FGOS15}, which allows the parties to notice Eve's attack most of the times.

Fix $\eps>0$ to be some small enough constant. 
For any noiseless protocol $\pi$ of length $T$ we will simulate $\pi$ in the $\mmain$ model 
using the following procedure, defined with parameters $k=\eps^{-1}, t=k\eps^{-1}, q < (kt)^{-2}$.
\begin{enumerate}
\item The parties perform the scheme~\cite{BR14} for $N=O(T/\eps)$ rounds.
\item In each round, each label is transmitted via the following process:
	\begin{enumerate}
	\item The sender encodes the label via a \bc with parameter~$q$, and sends it encoded with a $k$-silence encoding.
	\item Repeat for $t$~times: 
	\setlength{\leftmarginiii}{12pt}
	\begin{itemize}
	\item
	 if the receiver hasn't received a valid label, he sends back a ``repeat-request'' encoded using the \bc and a $1$-silence encoding. [Otherwise, he does nothing.] 
	\item
	each time the sender gets a valid repeat-request, he sends the original message transmission again encoded with a (fresh instance of)  \bc, and $k$-silence encoding. 
	\end{itemize}
	\item The receiver sets that round's output to be the first valid label he received during step (b), or~$\bot$ if all the $t$~repetitions are invalid (either erased, or marked invalid by the \bc).
	\end{enumerate}
\end{enumerate} 

First we note that indeed the protocol takes at most $O(N)=O(T)$ rounds, since the BR protocol takes $O(N)$ rounds, each of which is expended by at most $O(kt|\Gamma|)=O_\eps(1)$ rounds.

We now show that the above protocol achieves noise rate up to $1-O(\eps)$.
We split the protocols into \emph{epochs} , 
where each epoch corresponds to a single
label transmission of the \cite{BR14} simulation
(i.e., the label and its repetitions are a single epoch).

Next, we divide the epochs into two disjoint sets:  \emph{deleted} and  \emph{undeleted} epochs. The former consists of any epoch in which the receiver outputs~$\bot$. The set of undeleted epochs is split again into two disjoint sets: ``lucky'' and ``non-lucky''. A lucky epoch is any epoch in which Eve's corruption is not detected by the \bc.

For each epoch $e$ we define $c(e)$ as the communication made by the parties in this epoch, and $r(e)$ as the rate of noise made by Eve in this epoch, that is, the number of corruptions Eve makes in the epoch is~$r(e)c(e)$.
The global noise rate is a \emph{weighted} average of the noise rate per epoch, where each epoch is weighted by the communication in that epoch.

Fix a noise pattern~$E$ for the adversary, and assume that the simulation process fails with noise pattern~$E$. 
Lemma~\ref{lem:Nhigh} tells us that the number of deleted epochs plus twice the number of incorrect epochs (where the receiver outputs a wrong label) must be at least $(1-\eps)^2N$. 
Note that incorrect epoch must be lucky, and the probability for an epoch to be lucky is at most $q\cdot 2t$, since there are at most $2t$ messages in each epoch and a probability at most~$q$ to break the \bc for a single message.
Hence, for our choice of~$q$,  the number of \emph{deleted} epochs is at least~$(1-3\eps)N$  with overwhelming probability.

First, we analyze deleted epochs. The following is immediate.
\begin{claim}
In each deleted epoch $e$, the noise rate is 
\[
r(e) \ge \frac{k(t+1)}{(k+1)t+k} =  1-\frac{t}{kt+t +k} \ge 1-\eps
\]
while the communication is $c(e) \ge k+t > \eps^{-2}+\eps^{-1}$.
\end{claim}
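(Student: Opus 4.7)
The plan is to parameterize Eve's behaviour within a deleted epoch and then minimize the noise rate over her choices. In a deleted epoch the receiver never obtains a valid label, so by the protocol description he fires all $t$ repeat-requests; combined with the $k$ non-silent symbols of the sender's initial $k$-silence-encoded transmission this already gives the communication bound $c(e)\ge k+t$, irrespective of how many retransmissions the sender performs.

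For the noise-rate bound I would parameterize Eve's attack by $s\in\{0,\ldots,t\}$, the number of iterations in which she lets the repeat-request reach the sender uncorrupted. In each of those $s$ iterations the sender retransmits a fresh $k$-silence-encoded, Blueberry-coded label that Eve must also invalidate; in each of the remaining $t-s$ iterations she can suppress the repeat-request at a cost of a single corruption. Appealing to the silence-encoding analysis from the proof of Theorem~\ref{thm:ProtocolTwoThirds} (any invalidation of a $k$-silence-encoded codeword requires at least $k$ corruptions, since fewer leave a unique nearest codeword --- the true label), Eve pays at least $k$ corruptions for the initial transmission and $k$ for each of the $s$ retransmissions, so
\[
r(e)\;\ge\;\frac{k+(t-s)+sk}{k+t+sk}.
\]
A short calculation (or comparing consecutive values of $s$) shows that the right-hand side is non-increasing in $s$, so its minimum over $s\in\{0,\ldots,t\}$ is attained at $s=t$, yielding the claimed $\tfrac{k(t+1)}{(k+1)t+k}=1-\tfrac{t}{kt+t+k}$; plugging in the parameter choices $k=\eps^{-1}$, $t=k\eps^{-1}=\eps^{-2}$ the residual is at most $\eps$ and hence $r(e)\ge 1-\eps$, while $c(e)\ge k+t = \eps^{-1}+\eps^{-2}$ is exactly $\eps^{-2}+\eps^{-1}$.

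The main obstacle I anticipate is the per-transmission lower bound of $k$ corruptions. One has to argue it uniformly for silencings (which directly remove one non-silent position from the received codeword) and for Blueberry-detected value changes (which by the shared-randomness analysis effectively convert a non-silent position into an erasure at the cost of one corruption), and to verify that after strictly fewer than $k$ such corruptions the silence decoder still has a unique nearest codeword, namely the true label. Once this cost is locked in symmetrically across the initial transmission and every retransmission, the remainder of the argument is a one-variable minimisation producing exactly the inequalities stated in the claim.
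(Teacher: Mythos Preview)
Your argument is correct. The paper gives no proof for this claim beyond asserting it as ``immediate'', and your parameterization by the number $s$ of repeat-requests that reach the sender, together with the $k$-silence-encoding distance property (at least $k$ corruptions needed to invalidate a transmission, as stated before Theorem~\ref{thm:ProtocolTwoThirds}), is exactly the computation that makes the bounds explicit; the monotonicity in $s$ and the substitution $k=\eps^{-1}$, $t=\eps^{-2}$ then yield the stated inequalities. The only wrinkle is that $k+t=\eps^{-1}+\eps^{-2}$ equals rather than strictly exceeds $\eps^{-2}+\eps^{-1}$, which you already flag; that is an imprecision in the paper's statement, not in your reasoning.
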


Next we analyze the non-deleted epochs. We note that~$c(e)$ in this case ranges between $\eps^{-1}$ to~$\eps^{-3}$, and we now relate $r(e)$ to $c(e)$. It is crucial that whenever~$c(e)$ exceeds~$\eps^{-2}$, the amount of noise will be high enough, to maintain a global noise rate of almost one. 

First, we deal with ``lucky'' epochs. 
For simplification, we assume that if $e$ is ``lucky'', then the noise rate is 0, and the communication is the maximal possible~$kt+t+1$.
We will choose $q$ to satisfy~$2qt\cdot (kt+t+1) \ll 1$, so that the effective noise added by such epochs is negligible.

\goodbreak
Then, we need to relate $c(e)$ and $r(e)$ for the rest of the epochs.
\begin{claim}
If $e$ is a non-lucky non-deleted epoch, then
\[
r(e)c(e) \ge \max \big ( 0\, ,\,  c(e) - 2k -1 \big ).
\]
\end{claim}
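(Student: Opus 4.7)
The plan is to bound Eve's mandatory corruptions from below by dissecting the structure of the retry loop inside the epoch. Denote by $m_A$ the total number of label transmissions sent by Alice (her initial transmission plus retransmissions) and by $m_B$ the total number of repeat-requests sent by Bob. Since each label transmission contributes $k$ non-silent symbols (via the $k$-silence encoding) and each repeat-request contributes $1$ non-silent symbol (via the $1$-silence encoding), we have $c(e)=m_A k+m_B$.

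The first key observation uses the non-deleted property: Bob eventually obtains a valid label, and because he stops sending repeat-requests the moment he holds a valid label, Alice's \emph{last} transmission must be the one that arrived uncorrupted. Each of the preceding $m_A-1$ transmissions must therefore have been erased by Eve. Symmetrically, Alice retransmits only upon receiving a valid repeat-request, so $m_A-1$ of Bob's $m_B$ repeat-requests reached her unerased and the remaining $m_B-(m_A-1)$ were erased by Eve.

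The second ingredient is the non-lucky property combined with the silence-encoding distance bound. Since the blueberry code converts every attempted value-forging corruption into a detected erasure (as in \cite{FGOS15}), Eve's only effective action is outright erasure. By the silence-encoding property stated in Appendix~\ref{app:freepos}, each erased label transmission costs Eve at least $k$ corruptions and each erased repeat-request at least $1$ corruption, so
\[
r(e)c(e)\;\ge\;(m_A-1)k\;+\;(m_B-m_A+1).
\]

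The final step is an amortized inspection of the iterations. The algebraic rearrangement of the display above gives $c(e)-r(e)c(e)\le k+(m_A-1)$, and the ``uncorrupted'' communication is charged iteration-by-iteration: the initial transmission contributes at most $k$ free symbols (zero if the initial already succeeded, or forced to be paid in full by Eve if not); the final successful iteration contributes at most $k+1$ free symbols (Bob's repeat-request plus Alice's retransmission that Eve chose not to erase); and every interior iteration forces Eve into the dichotomy of paying $1$ corruption per repeat-request (a unit-cost match, yielding zero slack) or $k$ corruptions per $k+1$ symbols of extra communication (yielding at most $1$ slack per iteration). The combined boundary contributions sum to at most $2k+1$; the case $c(e)\le 2k+1$ is handled automatically by the $\max(0,\cdot)$. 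The main obstacle will be the careful bookkeeping of this boundary argument, ensuring that the slack accrued across interior iterations is absorbed either into the corruptions Eve must already pay or into the $2k+1$ boundary budget, after which the claimed bound collapses out of the inequality.
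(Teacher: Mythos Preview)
Your decomposition mirrors the paper's: writing $c(e)=m_Ak+m_B$ and lower-bounding Eve's cost by $(m_A-1)k+(m_B-m_A+1)$ gives $c(e)-r(e)c(e)\le k+m_A-1$, which is exactly the computation implicit in the paper's $(x,y)$ case analysis. The divergence is only in your final ``amortized'' step.

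That step does not go through. You correctly observe that each interior iteration in which Bob's request reaches Alice and Alice's reply is then erased yields one unit of slack ($k$ corruptions against $k+1$ new symbols). But these units are not absorbable: consider the attack where \emph{every} request reaches Alice and Eve erases every Alice-transmission except the last. Then $m_B=m_A-1$, Eve's cost is exactly $(m_A-1)k$, and the uncorrupted communication is exactly $k+m_A-1$; for $m_A>k+2$ (and $m_A$ may be as large as $t+1$) this already exceeds $2k+1$. The $m_A-1$ free symbols are Bob's successful requests, which Eve never touches, so there is no mechanism by which they can be charged to ``corruptions Eve must already pay'' or to a fixed $2k+1$ boundary budget. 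The paper's own argument arrives at $2k+1$ only via a miscount of the communication in its second case as $(y{+}2)k+x+1$ rather than the correct $(y{+}1)k+x+y$; with the right count one gets slack $k+y$, i.e.\ the same $k+m_A-1$ you derived. What your argument actually establishes is the weaker bound $r(e)c(e)\ge c(e)-(k+t)$, and that is still enough for the downstream noise-rate estimate since $|B_1|(k+t)\le 3\eps N\cdot O(\eps^{-2})$ is an $O(\eps)$ fraction of $\sum_{e\text{ deleted}}c(e)\ge(1-3\eps)N(\eps^{-2}+\eps^{-1})$.
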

\begin{proof}
If the epoch is not lucky, then it must have concluded correctly (Bob has eventually received the correct label).
Thus the only attack Eve can perform in order to increase $c(e)$, is to delete Bob's reply-request and Alice's answers up to some point (in addition to corrupting Alice original label).

Thus, Eve must block the first $k$ symbols sent by Alice (as long as $c(e)>k$), but she must not block the last $k$ symbols made by Alice. 
Assume Eve only blocks Bob's reply-requests. Then she can block $x \le t-1$ such requests and make $x$ corruptions out of total communication $x+2k+1$.
Another possible attack is to let $y$ of Bob's repeat-request go through (as long $y+x \le t-1$) but delete Alice's replies (again, except for the last one). This will cause $yk+x$ corruptions out of communication $(y+2)k+x+1$. 
\qed
\end{proof}

The relative noise rate caused by any noise pattern $E$ that results in a failed instance of the simulation is thus bounded by
\begin{align*}
\erate^{\text{adp}}(E) &\ge 
\frac{\sum_{e:\text{ deleted}}r(e)c(e)+\sum_{e: \text{ lucky}}0 + \sum_{e:\text{ correct non-lucky}}\max(0 , c(e)-2k-1)}{\sum_e c(e) } \\
&\ge \frac{(1-\eps)\sum_{e: \text{ deleted}}c(e) + \sum_{e:\text{ correct non-lucky}}\max(0 , c(e)-2k-1)}{ \sum_{e: \text{ deleted}}c(e) + \sum_{e: \text{ lucky}}(kt+t+1)+\sum_{e:\text{ correct non-lucky}}c(e)}.
\end{align*}
Recall that with very high probability $>(1-3\eps) N$ epochs are deleted 
and at most $2qtN$ epochs are lucky.
Thus $ \sum_{e: \text{ lucky}}(kt+t+1)$ is upper bounded by $2qt(kt+t+1)N$ with high probability. We take $q \ll 1/2t(kt+t+1)$ and neglect this term in the denominator.

Now split the correct non-lucky epochs to two sets: $B_0=\{ e \text{} \mid c(e) \le \eps^{-1.5}\}$ contains epochs with ``low'' communication and~$B_1$ contains all the other correct non-lucky epochs (with ``high'' communication).
For small enough~$\eps$,
\begin{align*}
\erate^{\text{adp}}(E)&\gtrapprox
\frac{(1-\eps)\sum_{e: \text{ deleted}}c(e) + \sum_{B_1}(c(e)-2\eps^{-1})}
{\sum_{e: \text{ deleted}}c(e) +|B_0|\eps^{-1.5} + \sum_{B_1}c(e)} \\
& \ge  \frac{(1-\eps)\sum_{e: \text{ deleted}}c(e) - |B_1|2\eps^{-1}}
{\sum_{e: \text{ deleted}}c(e) +|B_0|\eps^{-1.5} } \\
&\ge 1-O(\eps),
\end{align*}
since $\sum_{e: \text{ deleted}}c(e) \ge (1-3\eps)(\eps^{-2}+\eps^{-1})N$ and  $|B_0|+|B_1| \le (1+3\eps)N$, with very high probability.
\qed
\end{proof}

\end{document}